\newcommand{\correspondingauthor}{*}
\newtheorem{definition}{Definition}[section]
\newtheorem{lemma}{Lemma}[section]
\newtheorem{theorem}{Theorem}[section]
\newtheorem{assumption}{Assumption}[section]
\title{Enhanced Privacy Leakage from Noise-Perturbed Gradients via Gradient-Guided Conditional Diffusion Models}
\author{
    %Authors
    % All authors must be in the same font size and format.
    Jiayang Meng\textsuperscript{\rm 1}\equalcontrib, Tao Huang\textsuperscript{\rm 2}\equalcontrib, Hong Chen\textsuperscript{\rm 1}\correspondingauthor, Chen Hou\textsuperscript{\rm 2}\correspondingauthor, Guolong Zheng\textsuperscript{\rm 2}
}
\title{My Publication Title --- Single Author}
\author {
    Author Name
}
\title{My Publication Title --- Multiple Authors}
\author {
    % Authors
    First Author Name\textsuperscript{\rm 1,\rm 2},
    Second Author Name\textsuperscript{\rm 2},
    Third Author Name\textsuperscript{\rm 1}
}
\begin{document}

\maketitle

\begin{abstract}
Federated learning synchronizes models through gradient transmission and aggregation. However, these gradients pose significant privacy risks, as sensitive training data is embedded within them. Existing gradient inversion attacks suffer from significantly degraded reconstruction performance when gradients are perturbed by noise-a common defense mechanism. In this paper, we introduce gradient-guided conditional diffusion models for reconstructing private images from leaked gradients, without prior knowledge of the target data distribution. Our approach leverages the inherent denoising capability of diffusion models to circumvent the partial protection offered by noise perturbation, thereby improving attack performance under such defenses. We further provide a theoretical analysis of the reconstruction error bounds and the convergence properties of the attack loss, characterizing the impact of key factors—such as noise magnitude and attacked model architecture—on reconstruction quality. Extensive experiments demonstrate our attack's superior reconstruction performance with Gaussian noise-perturbed gradients, and confirm our theoretical findings.
\end{abstract}

% Uncomment the following to link to your code, datasets, an extended version or similar.
% You must keep this block between (not within) the abstract and the main body of the paper.

% \begin{links}
%     % \link{Code}{https://aaai.org/example/code}
%     % \link{Datasets}{https://aaai.org/example/datasets}
%     \link{Extended version}{https://arxiv.org/abs/2511.10423}
% \end{links}

\section{Introduction}

Federated Learning (FL) is a prominent paradigm for collaborative model training, which aggregates locally computed gradients from multiple clients without sharing raw private data \cite{kairouz2021advances, li2020federated}. However, gradients inherently encode sensitive training data, making them vulnerable to Gradient Inversion Attacks (GIAs) \cite{b5,b10,b11}—where adversaries aim to recover private training data from leaked gradients. This privacy risk is especially critical in sensitive domains such as medical imaging and biometric authentication, where private training data is highly valuable and costly to acquire, making it an attractive target for adversaries.

Existing GIAs based on pixel-space differentiation \cite{b5,b6,ig} struggle with high-resolution images, as iterative differentiation of a randomly initialized image becomes computationally prohibitive at higher resolutions. GIAs based on generative models, such as Generative Adversarial Networks (GANs) and diffusion models, require an additional assumption that the pre-training data distribution of the generative model aligns with that of the target images. Furthermore, GAN-based GIAs \cite{gi,gias,ggl,gifd} often suffer from model collapse and training instability, resulting in blurry or artifact-ridden reconstructions. Diffusion model-based GIAs \cite{b13,b14,b15,b25}, which typically rely on model fine-tuning, inherently struggle to precisely recover image details due to the lack of pixel-level guidance.

To mitigate privacy risks associated with gradient leakage, a common strategy is to perturb gradients with noise (e.g., Gaussian or Laplacian) before transmission \cite{b7,b8,b9,b10,b11}. Empirical studies \cite{b25,b26,b27} confirm that this noise injection effectively mitigates GIAs, significantly degrading GIA performance when reconstructing training data from these noise-perturbed gradients.

\begin{figure}[t]
\centering
\includegraphics[width=0.48\textwidth]{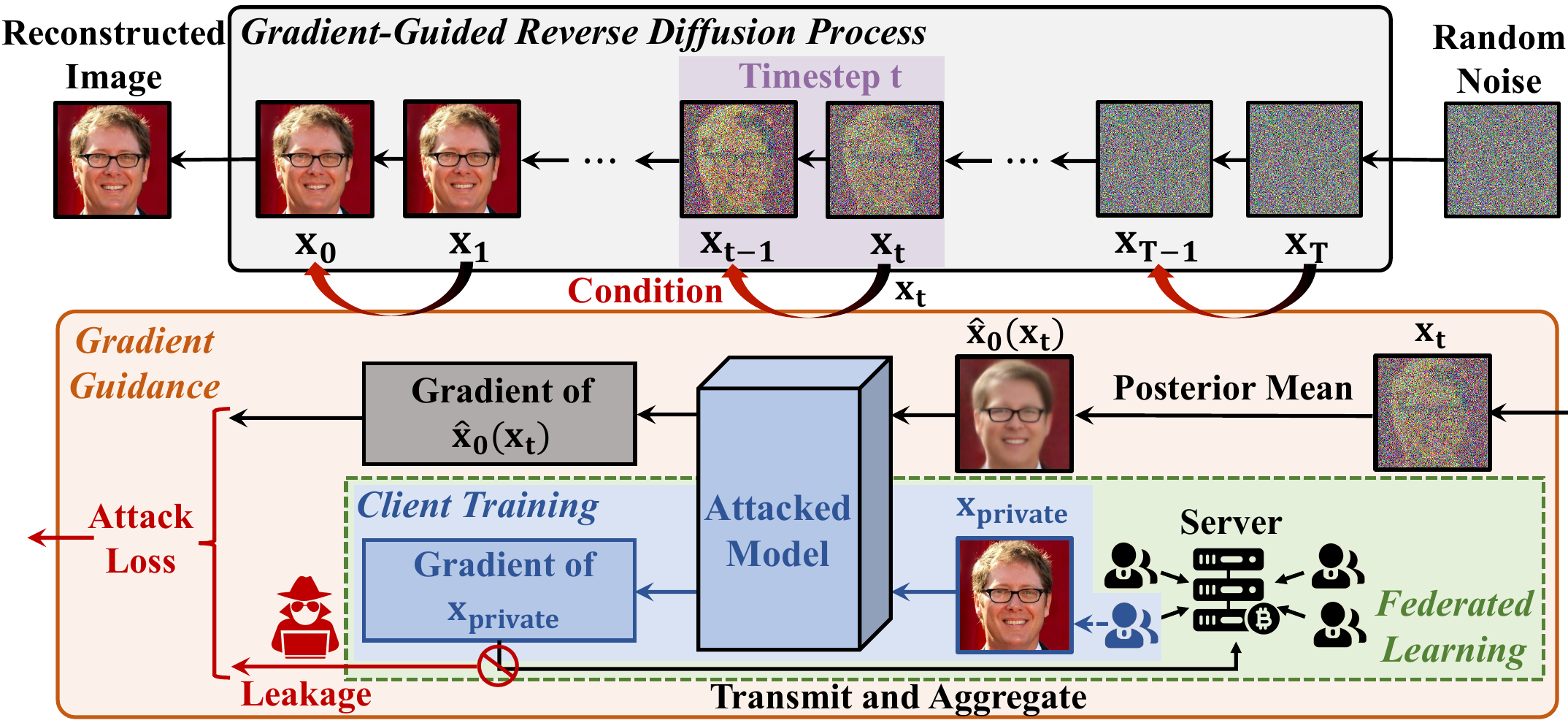}
\caption{An overview of our proposed gradient-guided conditional diffusion model, with a detailed view of the gradient guidance procedure applied at timestep $t$.}
\label{reverse_guide}
\end{figure}

This paper explores the potential of diffusion models' denoising capability for partially circumventing protections based on noise perturbation, thereby improving attack performance. Our contributions are summarized as follows.

\begin{itemize} 
\item We introduce a pixel-level, prior-free GIA method to reconstruct high-fidelity private images from noise-perturbed gradients. Our proposed attack is implemented through Gradient-guided Conditional Diffusion Models (GG-CDMs), as illustrated in Figure \ref{reverse_guide}.
\item Theoretically, we derive the reconstruction error bounds and characterize the convergence properties of the attack loss, quantifying how key factors—such as noise scale and model architecture—affect the reconstruction fidelity. Furthermore, we explicitly propose the RV metric to quantify a model's intrinsic vulnerability to GIAs.
\item We conducted extensive experiments to validate the effectiveness of our proposed GIA method, with results consistently supporting our theoretical findings.
\end{itemize}

\section{Related Work}

\subsection{Pixel-Space Differentiation-Based GIAs}
Gradient inversion attacks (GIAs) based on pixel-space differentiation \cite{b5,b6,ig} typically initialize a differentiable dummy image and iteratively update it by minimizing an attack loss function. Specifically, \cite{b5,b6} minimize the Mean Squared Error (MSE) for pixel-wise accuracy. \cite{ig} improves the reconstruction fidelity by introducing a carefully designed attack loss function. However, these approaches face significant scalability challenges when reconstructing high-resolution images, due to the inherent computational complexity of differentiation. Furthermore, their performance is severely degraded when the leaked gradients are perturbed by noise, as the target image, which corresponds to these noise-perturbed gradients, is distorted.

\subsection{Generative Model-Based GIAs}
Recent advances leverage generative models for high-resolution image reconstruction, avoiding the computational burden of differentiation-based methods. Specifically, \cite{gi,gias,ggl,gifd} have demonstrated that Generative Adversarial Networks (GANs) can serve as effective priors by approximating the natural image manifold, thus improving GIA performance. However, GAN-based methods require distributional alignment between the target image and the GAN’s training set; moreover, they often suffer from training instability. Diffusion models provide a promising alternative, with \cite{b25} demonstrating that model fine-tuning enables stable and noise-free reconstructions. Nevertheless, this fine-tuning paradigm suffers from several limitations: (i) high computational cost due to its iterative nature, (ii) reliance on the distributional prior of target data, (iii) model overfitting that degrades generation quality, and (iv) the absence of pixel-level guidance for fine-grained detail recovery.

\section{Preliminaries}
\subsection{Diffusion Models}

Diffusion models \cite{b29,b31} are a class of generative models capable of producing high-quality and diverse samples. In the forward diffusion process, diffusion models gradually perturb clean data $\mathbf{x}_0 \sim p_{\text{data}}$ by adding Gaussian noise until it becomes pure noise. For both DDPM \cite{ho2020denoising} and DDIM \cite{b30}, the posterior distribution of any $\mathbf{x}_t, t\in[0,T]$ given $\mathbf{x}_0$ is defined as:
\begin{equation}
    \mathbf{x}_t=\sqrt{\alpha_t}\mathbf{x}_0+\sqrt{1-\alpha}_t\epsilon_t, \epsilon_t\sim \mathcal{N}(0,I),
    \label{ddim_ddpm_forward}
\end{equation}
where $\{\alpha_t\}_{t=0}^T$ is the pre-defined noise schedule. The objective is to train a Denoiser, $\epsilon_{\theta}(\mathbf{x}_t,t)$, to predict the noise $\epsilon_t$ added at each timestep $t$. This is achieved by minimizing the expected discrepancy between the true and predicted noise:
\begin{equation}
    L_{\text{Denoiser}} = \mathbb{E}_{t,\mathbf{x}_0,\epsilon_t}[\|\epsilon_t - \epsilon_{\theta}(\mathbf{x}_t,t)\|^2].
    \label{diffusion_Train}
\end{equation}

During the reverse diffusion process, the joint distribution $p(\mathbf{x}_{0:T})$ is given by:
\begin{equation}
    p(\mathbf{x}_{0:T})=p(\mathbf{x}_T) \prod_{t=1}^{T} p_\theta(\mathbf{x}_{t-1} | \mathbf{x}_t),
    \label{reverse_1}
\end{equation}
\begin{equation}
    p_\theta(\mathbf{x}_{t-1} | \mathbf{x}_t) = \mathcal{N}(\mathbf{x}_{t-1}; \mu_\theta(\mathbf{x}_t, t), \Sigma_\theta(\mathbf{x}_t, t)).
    \label{reverse_2}
\end{equation}

Specifically, our attack follows DDIM's reverse sampling:
\begin{equation}
\begin{split}
    \mathbf{x}_{t-1} &= \sqrt{\alpha_{t-1}} 
\underbrace{\left( \frac{\mathbf{x}_t - \sqrt{1 - \alpha_t} \, \epsilon_\theta(\mathbf{x}_t,t)}{\sqrt{\alpha_t}} \right)}_{\text{predicted } \mathbf{x}_0}
\\&+ \underbrace{\sqrt{1 - \alpha_{t-1} - \sigma_t^2 }\cdot \epsilon_\theta(\mathbf{x}_t,t)}_{\text{direction pointing to } \mathbf{x}_t}
+\underbrace{\sigma_t \epsilon_t }_{\text{random noise}},
\end{split}
\end{equation}
where $\sigma_t = \eta \sqrt{\frac{1-\alpha_{t-1}}{1-\alpha_t}} \sqrt{1-\frac{\alpha_t}{\alpha_{t-1}}}$, and $\epsilon_t \sim \mathcal{N}(0,I)$. 

\subsection{Conditional Diffusion Models}

Conditional diffusion models \cite{b16,b17,b18,b22,b23,b24} incorporate a given condition or measurement $\mathbf{y}$ by introducing an additional likelihood term $p(\mathbf{x}_t|\mathbf{y})$. According to the Bayes’ rule,
\begin{equation}
    \nabla_{\mathbf{x}_t} \log p(\mathbf{x}_t|\mathbf{y}) = \nabla_{\mathbf{x}_t} \log p(\mathbf{x}_t) + \nabla_{\mathbf{x}_t} \log p(\mathbf{y} | \mathbf{x}_t).
\end{equation}

With DDIM as the prior, the term $\nabla_{\mathbf{x}_t} \log p(\mathbf{y} | \mathbf{x}_t)$ can be utilized to perform an additional correction step:
\begin{equation}
    \mathbf{x}_{t-1}= \underbrace{\text{DDIM}(\mathbf{x}_t, \epsilon_\theta(\mathbf{x}_t,t))}_{\text{reverse sampling step}} - \underbrace{\gamma\nabla_{\mathbf{x}_t} \log p(\mathbf{y} | \mathbf{x}_t)}_{\text{conditional correction step}}.
    \label{correct}
\end{equation}

\section{Methodology}

\subsection{Problem Statement}

% In FL, consider a client-side model $F(\mathbf{x}; W)$ parameterized by $W$ under attack. For a given private image $\mathbf{x}_{\text{private}} \in \mathbb{X}$, the corresponding gradient of $F(\mathbf{x}; W)$ is:
% \begin{equation}
%     g_0(\mathbf{x}_{\text{private}}) = \nabla_W F(\mathbf{x}_{\text{private}}; W).
% \label{sgd}
% \end{equation} 

% In the context of Gaussian noise-based defenses commonly employed for gradient protection, the perturbed gradient can be expressed as ${g}_{\text{leaked}} = g_0(\mathbf{x}_{\text{private}}) + \mathcal{N}(0, \sigma^2 I) = \mathcal{N}(g_0(\mathbf{x}_{\text{private}}), \sigma^2I)$, which follows a Gaussian distribution. Given leaked gradient, ${g}_{\text{leaked}}$, access to attacked model $F$ (which allows gradient computation for arbitrary inputs), and a diffusion model $\epsilon_{\theta}(\mathbf{x})$ pre-trained on an arbitrary dataset without prior alignment of distributions with $\mathbf{x}_{\text{private}}$, the objective is to generate an image $\mathbf{x}$ such that $g(\mathbf{x})$ approximates $g_0(\mathbf{x}_{\text{private}})$, thereby ensuring that $\mathbf{x}$ approximates $\mathbf{x}_{\text{private}}$ for reconstruction.

In federated learning, consider a client-side model $F(\mathbf{x}; W)$ parameterized by $W$ under attack. For a private local image $\mathbf{x}_{\text{private}} \in \mathbb{X}$, the corresponding gradient of $F(\mathbf{x}; W)$ is:
\begin{equation}
    \mathbf{g}(\mathbf{x}_{\text{private}}) = \nabla_W F(\mathbf{x}_{\text{private}}; W).
\label{sgd}
\end{equation} 

Under a defense using Gaussian noise perturbation, the leaked gradient $\mathbf{g}_{\text{leaked}} = \mathbf{g}(\mathbf{x}_{\text{private}}) + \mathcal{N}(0, \sigma^2 I) = \mathcal{N}(\mathbf{g}(\mathbf{x}_{\text{private}}), \sigma^2I)$ follows a Gaussian distribution. Given leaked gradient $\mathbf{g}_{\text{leaked}}$, access to the attacked model $F$ (allowing gradient computation for arbitrary inputs), and a diffusion model $\epsilon_{\theta}(\mathbf{x})$ pre-trained on an arbitrary dataset without requirement of distributional alignment with $\mathbf{x}_{\text{private}}$, the objective is to generate an image $\mathbf{x}$ such that $\mathbf{g}(\mathbf{x})$ approximates $\mathbf{g}(\mathbf{x}_{\text{private}})$, thereby ensuring $\mathbf{x}$ approximates $\mathbf{x}_{\text{private}}$.

\subsection{Posterior Distribution Approximation}\label{Approximation}
To achieve computationally efficient and fine-grained pixel-level reconstruction, we guide the DDIM reverse sampling process using leaked gradients as a conditioning signal. This approach enables GIAs within a single reverse diffusion process. A key challenge in this formulation is to derive $p(\mathbf{y} = \mathbf{g}_{\text{leaked}} | \mathbf{x}_t)$ in Equation (\ref{correct}), as $\mathbf{x}_t$ is time-dependent. A tractable approximation can be factorized as:
\begin{equation}
    p(\mathbf{y} = \mathbf{g}_{\text{leaked}} |\mathbf{x}_t) =\int p(\mathbf{y}=\mathbf{g}_{\text{leaked}}|\mathbf{x}_0)p(\mathbf{x}_0|\mathbf{x}_t)d\mathbf{x}_0.
\end{equation}

From DDIM's forward diffusion sampling (Equation~(\ref{ddim_ddpm_forward})), the posterior mean $\hat{\mathbf{x}}_{0}$ conditioned on $\mathbf{x}_t$ can be obtained as:
\begin{equation}
\begin{split}
    \hat{\mathbf{x}}_{0}(\mathbf{x}_t):&=\mathbb{E}[\mathbf{x}_0|\mathbf{x}_t] = \mathbb{E}_{\mathbf{x}_0\sim p(\mathbf{x}_0|\mathbf{x}_t)}[\mathbf{x}_0]\\&=\frac{1}{\sqrt{\alpha_t}} \left( \mathbf{x}_t - (1 - \alpha_t) \epsilon_{\theta}(\mathbf{x}_t,t) \right).
\end{split}
\end{equation}

We then approximate:
\begin{equation}
\begin{split}
    p({\mathbf{g}_{\text{leaked}}} | \mathbf{x}_{t}) \simeq  p({\mathbf{g}_{\text{leaked}}} | \hat{\mathbf{x}}_{0}(\mathbf{x}_t)).
\end{split}
\end{equation}

The error induced by the posterior mean approximation is naturally quantified by the Jensen gap.
\begin{definition}[\textbf{Jensen Gap \cite{pre-trained_diffusion}}]
    Let $\mathbf{x}$ be a random variable with distribution $p(\mathbf{x})$. For some function $f$ that may or may not be convex, Jensen gap is defined as:
    \begin{equation}
        \mathcal{J}(f, \mathbf{x} \sim p(\mathbf{x})) = \mathbb{E}[f(\mathbf{x})] - f(\mathbb{E}[\mathbf{x}]),
    \end{equation}
where the expectation is taken over $p(\mathbf{x})$.
\end{definition}

Specifically, the reconstruction error introduced by the posterior mean approximation is given by $\mathbb{E}_{\mathbf{x}_0|\mathbf{x}_t}[f(\mathbf{x}_0)]-f(\mathbb{E}[\mathbf{x}_0|\mathbf{x}_t])$, where $f$ is a function such as the gradient computation. A smaller Jensen gap implies a lower approximation error and thus higher reconstruction fidelity.

\subsection{Attack Methodology}

In this section, we propose the \textit{Gradient-guided Gaussian Spherical Sampling-Reconstruction (GGSS-R)}, a novel GIA method that leverages Gradient-guided Conditional Diffusion Models (GG-CDMs) to reconstruct private images from leaked gradients via a single reverse diffusion process.

We first introduce a symmetric and differentiable attack loss function $\mathcal{L}\left( \cdot,\cdot \right)=\|\cdot-\cdot\|$. Given the leaked gradient $\mathbf{g}_{\text{leaked}}$, the reverse sampling at timestep $t$ is formulated as:
\begin{equation}
\arg \min_{\mathbf{x}_{t-1}} \left[\nabla_{\mathbf{x}_t}\mathcal{L}\left(\mathbf{g}\left(\hat{\mathbf{x}}_{0}\left(\mathbf{x}_t\right)\right),\mathbf{g}_{\text{leaked}}\right)\right]^T (\mathbf{x}_{t-1} -\mathbf{x}_t),
\end{equation}
where $\mathbf{x}_{t-1} \in CI_{1-\delta}$, with $CI_{1-\delta}$ denoting the $1-\delta$ confidence intervals for the Gaussian distribution defined in Equation (\ref{reverse_2}). This constrained optimization aims to steer the reverse sampling toward directions that most rapidly minimize the attack loss while enforcing sampling within high-confidence intervals under the DDIM Gaussian distribution.

Due to the high-dimensional nature of gradients, direct optimization is challenging. Previous theoretical work \cite{ledoux2001concentration} has established that Gaussian-distributed variables exhibit concentration around their mean. The Laurent-Massart bound \cite{laurent2000adaptive} further provides bounds on the deviation of a Gaussian variable from its mean, as formalized in Lemma \ref{lemma1}.

\begin{lemma}[Laurent-Massart Bound \cite{laurent2000adaptive}]
    For an n-dimensional isotropy Gaussian distribution $\mathbf{x} \sim \mathcal{N}(\mu, \sigma^2I)$, it satisfies:
    \begin{equation}
        P(\|\mathbf{x}-\mu \|^2\geq n\sigma^2+2n\sigma^2(\sqrt{\epsilon}+\epsilon))\leq e^{-n\epsilon},
    \end{equation}
    \begin{equation}
        P(\|\mathbf{x}-\mu \|^2\leq n\sigma^2+2n\sigma^2\sqrt{\epsilon})\leq e^{-n\epsilon},
    \end{equation}
    \label{lemma1}
    where $n$ is $\mathbf{x}$'s dimension.
\end{lemma}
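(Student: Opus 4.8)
The plan is to prove both tail bounds with the classical Cram\'er--Chernoff (exponential Markov) method applied to the chi-squared law, since the statement is exactly the specialization of the Laurent--Massart inequality to the all-ones weight vector.

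First I would reduce to a standardized chi-squared. Because $(\mathbf{x}-\mu)/\sigma\sim\mathcal{N}(0,I_n)$, setting $Z_i=(x_i-\mu_i)/\sigma$ gives $\|\mathbf{x}-\mu\|^2/\sigma^2=\sum_{i=1}^n Z_i^2\sim\chi^2_n$, a sum of $n$ i.i.d.\ squared standard normals. Writing $W:=\sum_{i=1}^n(Z_i^2-1)=\|\mathbf{x}-\mu\|^2/\sigma^2-n$ and substituting $x:=n\epsilon$, the two claims become the standard upper and lower Laurent--Massart tails $P(W\ge 2\sqrt{nx}+2x)\le e^{-x}$ and $P(W\le -2\sqrt{nx})\le e^{-x}$, the latter controlling downward deviations of $\|\mathbf{x}-\mu\|^2$ below its mean $n\sigma^2$.

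Second, I would compute the cumulant generating function. For a single standard normal $\mathbb{E}[e^{\lambda Z^2}]=(1-2\lambda)^{-1/2}$ whenever $\lambda<1/2$, so by independence $\log\mathbb{E}[e^{\lambda W}]=-n\lambda-\tfrac{n}{2}\log(1-2\lambda)$ for $\lambda<1/2$. For the upper tail I would apply Markov's inequality to $e^{\lambda W}$ and bound this cumulant by $-n\lambda-\tfrac{n}{2}\log(1-2\lambda)\le \tfrac{n\lambda^2}{1-2\lambda}$ on $[0,1/2)$, which exhibits $W$ as sub-gamma with variance factor $v=2n$ and scale $c=2$; the standard sub-gamma inversion then yields $P(W\ge\sqrt{2vx}+cx)=P(W\ge 2\sqrt{nx}+2x)\le e^{-x}$. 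For the lower tail I would repeat with $\lambda<0$, where the sharper bound $-n\lambda-\tfrac{n}{2}\log(1-2\lambda)\le n\lambda^2$ gives sub-Gaussian control with variance factor $2n$ and no linear correction, producing $P(W\le-\sqrt{2vx})=P(W\le-2\sqrt{nx})\le e^{-x}$. Translating back through $x=n\epsilon$ recovers both displayed inequalities.

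The main obstacle is purely the two scalar cumulant inequalities and the inversion step: one must verify $-\lambda-\tfrac12\log(1-2\lambda)\le\tfrac{\lambda^2}{1-2\lambda}$ on $[0,1/2)$ (a one-variable monotonicity argument, since the difference vanishes at $0$ and has nonnegative derivative $2\lambda^2/(1-2\lambda)^2$) and its lower-tail counterpart, and then carry out the Legendre optimization so that $v$ and $c$ collapse into exactly the $2\sqrt{nx}$ and $2x$ in the statement. Equivalently, one can bypass the sub-gamma framework by computing the exact Chernoff minimizer $\lambda^\star=\tfrac12\cdot\tfrac{t}{n+t}$ for $P(W\ge t)$, which reduces the upper bound to the scalar inequality $\tfrac{t}{2}-\tfrac{n}{2}\log(1+t/n)\ge x$ evaluated at $t=2\sqrt{nx}+2x$; establishing this elementary but non-obvious inequality is the crux, after which the result is immediate.
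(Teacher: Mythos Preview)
Your proposal is correct and follows the standard Cram\'er--Chernoff route by which the Laurent--Massart inequality is established in the original reference. Note, however, that the paper does not supply its own proof of this lemma: it is stated as a cited result from \cite{laurent2000adaptive} and used as a black box to justify the spherical concentration underlying GGSS. There is therefore nothing in the paper to compare your argument against.

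One remark worth making: the second inequality as printed in the paper has a sign typo. The event $\|\mathbf{x}-\mu\|^2 \le n\sigma^2 + 2n\sigma^2\sqrt{\epsilon}$ has probability tending to one, not to zero, so the displayed bound cannot hold as written; the intended lower-tail statement is $P(\|\mathbf{x}-\mu\|^2 \le n\sigma^2 - 2n\sigma^2\sqrt{\epsilon}) \le e^{-n\epsilon}$. You implicitly corrected this when you translated to the centered variable $W$ and wrote the lower tail as $P(W\le -2\sqrt{nx})\le e^{-x}$, which is the right form. Your cumulant bounds and the derivative check $f'(\lambda)=2\lambda^2/(1-2\lambda)^2\ge 0$ are accurate, and the sub-gamma inversion with $v=2n$, $c=2$ delivers exactly the stated thresholds.
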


When $n$ is large, $e^{-n\epsilon}$ approaches $0$. In this case, $\mathbf{x}$ is distributed on a sphere with radius $\sqrt{n}\sigma$ centered at $\mu$. The distribution of $\mathbf{x}$ follows: $\mathbf{x} \sim S_{\mu_\theta(\mathbf{x}_t,t),\sqrt{n}\sigma_t}^{n}=\{\mathbf{x}\in \mathbb{R}^n, \|\mathbf{x}-\mu_\theta(\mathbf{x}_t,t)\|=\sqrt{n}\sigma_t\}$. With this spherical constraint, the optimization problem has a closed-form solution:
\begin{equation}
    \mathbf{x}_{t-1}^{*}=\mu_{\theta}(\mathbf{x}_t,t)-\sqrt{n}\sigma_t \frac{\nabla_{\mathbf{x}_t} \mathcal{L}(\mathbf{g}(\hat{\mathbf{x}}_0(\mathbf{x}_t)),\mathbf{g}_{\text{leaked}})}{\| \nabla_{\mathbf{x}_t} \mathcal{L}(\mathbf{g}(\hat{\mathbf{x}}_0(\mathbf{x}_t)),\mathbf{g}_{\text{leaked}})\|}.
    \label{xingxing}
\end{equation}
We term this sampling as \textit{Gradient-guided Gaussian Spherical Sampling (GGSS)}. Figure \ref{dsg_sampling_figure} provides an illustration of GGSS at timestep $t$, and Algorithm \ref{alg:1} outlines the GGSS-R.

\begin{figure}[ht]
\centering
\includegraphics[width=0.31\textwidth]{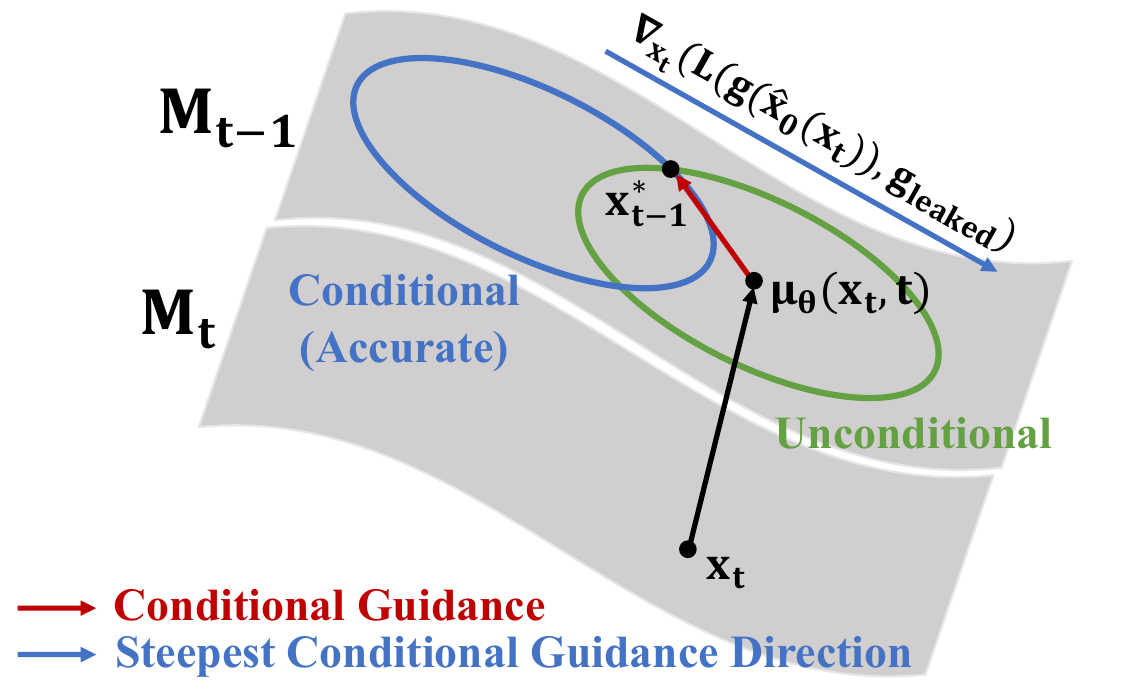}
\caption{An overview of GGSS at timestep $t$, mapping from data manifold $M_t$ to $M_{t-1}$. The green annulus represents the concentration region of samples in unconditional DDIM reverse sampling, and the blue annulus denotes the concentration region under accurate conditional guidance.}
\label{dsg_sampling_figure}
\end{figure}

\begin{algorithm}[ht]
\caption{GGSS-R Process.}
\textbf{Input}: Pre-trained diffusion model $\epsilon_{\theta}$ with maximum timestep $T$; Pre-defined noise schedule $\{\alpha_t\}_{t=1}^T$ and denoising scale $\{\sigma_t\}_{t=1}^T$; Attacked model $F(\mathbf{x}; W)$ where $\mathbf{x}\in \mathbb{R}^n$; Leaked gradient $\mathbf{g}_{\text{leaked}}$; Attack loss function $\mathcal{L}$.\\
\textbf{Output}: Reconstructed image $\mathbf{x}_0$.\\
\vspace{-2.5ex}
\begin{algorithmic}[1] %[1] enables line numbers
\STATE Randomly initialize a dummy image: $\mathbf{x}_T \sim \mathcal{N}(0, I)$.
\FOR{$t=T$ to $1$}
\STATE Predict added noise at forward step $t$: $\hat{\epsilon}_t \leftarrow \epsilon_{\theta}(\mathbf{x}_t, t)$.
\STATE Approximate the posterior mean: \\$\hat{\mathbf{x}}_0(\mathbf{x}_t) \leftarrow \frac{1}{\sqrt{\alpha_t}} \left( \mathbf{x}_t - (1 - \alpha_t) \hat{\epsilon}_t \right)$.
\STATE Compute the mean of $\mathbf{x}_{t-1}$: \\
$\mu_{\theta}(\mathbf{x}_t,t) = \sqrt{\alpha_{t-1}} \hat{\mathbf{x}}_0(\mathbf{x}_t) + \sqrt{1-\alpha_{t-1}-\sigma_t^2} \cdot \hat{\epsilon}_t$.\\
\STATE Input $\hat{\mathbf{x}}_0(\mathbf{x}_t)$ to $F$ to calculate $\mathbf{g}(\hat{\mathbf{x}}_0(\mathbf{x}_t))$.\\
\STATE Perform gradient-guided conditional correction: \\$\mathbf{x}_{t-1} \leftarrow \mu_{\theta}(\mathbf{x}_t,t) -\sqrt{n} \sigma_t \cdot \frac{\nabla_{\mathbf{x}_t} \mathcal{L}(\mathbf{g}(\hat{\mathbf{x}}_0(\mathbf{x}_t)),\mathbf{g}_{\text{leaked}})}{\| \nabla_{\mathbf{x}_t} \mathcal{L}(\mathbf{g}(\hat{\mathbf{x}}_0(\mathbf{x}_t)),\mathbf{g}_{\text{leaked}})\|}$.
\ENDFOR
\STATE \textbf{return} Reconstructed image $\mathbf{x}_0$.
\end{algorithmic}
\label{alg:1}
\end{algorithm}

In practice, deterministic sampling under a Gaussian spherical constraint minimizes the attack loss most rapidly. However, noise-perturbed gradients make the corresponding target image non-smooth and deviate from the original target. To address this, we combine DDIM's unconditional sampling direction with the optimal conditional sampling direction, thereby enhancing both generation quality and diversity. Specifically, line 7 in Algorithm~\ref{alg:1} is refined as:
\begin{equation}
\begin{split}
    \mathbf{x}_{t-1}=\mu_{\theta}&(\mathbf{x}_t,t) + r \frac{d_m}{\|d_m\|}, \\
    d_m = d^{\text{sample}}& + m_r (d^{*} - d^{\text{sample}}),\\
    %  d^{\text{sample}} = &\sigma_t \epsilon_t,\epsilon_t \sim \mathcal{N}(0,I),\\
    % d^{*} = -\sqrt{n} \sigma_t \cdot &\frac{\nabla_{\mathbf{x}_t} \mathcal{L}(g(\hat{\mathbf{x}}_0),g_{\text{leaked}})}{\| \nabla_{\mathbf{x}_t} \mathcal{L}(g(\hat{\mathbf{x}}_0),g_{\text{leaked}})\|}.
\end{split}
    \label{equation_dsg}
\end{equation}
where $d^{*}=-\sqrt{n} \sigma_t \cdot \frac{\nabla_{\mathbf{x}_t} \mathcal{L}(\mathbf{g}(\hat{\mathbf{x}}_0),\mathbf{g}_{\text{leaked}})}{\| \nabla_{\mathbf{x}_t} \mathcal{L}(\mathbf{g}(\hat{\mathbf{x}}_0),\mathbf{g}_{\text{leaked}})\|}$ represents the optimal adjustment step derived from the leaked gradient $\mathbf{g}_{\text{leaked}}$, while $d^{\text{sample}}=\sigma_t \epsilon_t$ (with $\epsilon_t \sim \mathcal{N}(0,I)$) denotes DDIM's stochastic unconditional sampling step. Their weighted combination, $d_m$, is controlled by the guidance rate $m_r \in[0,1]$, which balances two objectives, and $r$ dictates the overall step size of the conditional guidance.

This refined guidance mechanism strategically balances conditional guidance with sample quality, not only ensuring adherence to gradient-based conditions but also promoting smoother, higher-fidelity image generation. By exploiting DDIM's inherent generative capability, we mitigate the adverse effects of noise-induced perturbations, potentially generating reconstructions closer to the original target.

\section{Theoretical Analysis}

This section establishes theoretical upper and lower bounds on the reconstruction error arising from the posterior mean approximation, which is quantified by the Jensen gap in Section \ref{Approximation}. In addition, we analyze the convergence behavior of the attack loss. Furthermore, our analysis reveals the significant variation in vulnerability to GIAs across model architectures and proposes the $\text{RV}$ metric to quantify a model's intrinsic vulnerability to such attacks. Detailed proofs of Theorems are provided in Appendix A.

\subsection{Bounds of the Reconstruction Error}

For Algorithm \ref{alg:1}, Theorem \ref{UpperBoundJensen} establishes the upper bound of the reconstruction error.

\begin{theorem}[Upper Bound of the Reconstruction Error]
    Under the attacked model $F(\mathbf{x}; W)$, we assume that the adversary obtains a Gaussian noise–perturbed gradient $\mathbf{g}_{\text{leaked}} = \nabla_{W} F(\mathbf{x}_\text{private};W) + \mathcal{N}(0, \sigma^2 I)$. The gradient inversion attack follows Algorithm~\ref{alg:1}, using the approximation $p({\mathbf{g}_{\text{leaked}}} | \mathbf{x}_{t}) \simeq p({\mathbf{g}_{\text{leaked}}} | \hat{\mathbf{x}}_{0}(\mathbf{x}_{t}))$, where $\hat{\mathbf{x}}_{0}(\mathbf{x}_{t})$ denotes the posterior mean of $\mathbf{x}_0$ given $\mathbf{x}_t$. The reconstruction error—measured by the Jensen gap—is upper bounded by:
        \begin{equation*}
        \begin{split}
            &\mathcal{J}\left( \mathbf{g}(\mathbf{x}), p(\mathbf{x}_0 | \mathbf{x}_{t}) \right)=|\mathbb{E}[\mathbf{g}(\mathbf{x}_0)]-\mathbf{g}(\mathbb{E}[\mathbf{x}_0])| \\
            & \leq \frac{n}{\sqrt{2\pi \sigma^{2}}} \| \nabla_{\mathbf{x}} \mathbf{g}(\mathbf{x}) \| \int \| \mathbf{x}_0 - \hat{\mathbf{x}}_{0}(\mathbf{x}_{t}) \|p(\mathbf{x}_0 | \mathbf{x}_{t}) d\mathbf{x}_0,
        \end{split}
        \end{equation*}
        where $n$ is $\mathbf{x}$'s dimension, and $\mathbf{g}(\mathbf{x})=\nabla_{\mathbf{x}} \nabla_{W} F(\mathbf{x}; W)$.
    \label{UpperBoundJensen}
\end{theorem}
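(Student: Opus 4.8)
The plan is to bound the Jensen gap $\mathcal{J}\left(\mathbf{g}(\mathbf{x}), p(\mathbf{x}_0|\mathbf{x}_t)\right) = |\mathbb{E}[\mathbf{g}(\mathbf{x}_0)] - \mathbf{g}(\mathbb{E}[\mathbf{x}_0])|$ by relating the deviation of $\mathbf{g}$ to the deviation of its argument, and then to control the concentration of $p(\mathbf{x}_0|\mathbf{x}_t)$. Writing $\hat{\mathbf{x}}_0 = \hat{\mathbf{x}}_0(\mathbf{x}_t) = \mathbb{E}[\mathbf{x}_0|\mathbf{x}_t]$, I would first rewrite the gap as
\begin{equation*}
\mathcal{J} = \left| \int \left( \mathbf{g}(\mathbf{x}_0) - \mathbf{g}(\hat{\mathbf{x}}_0) \right) p(\mathbf{x}_0|\mathbf{x}_t)\, d\mathbf{x}_0 \right|,
\end{equation*}
using that $\int p(\mathbf{x}_0|\mathbf{x}_t)\,d\mathbf{x}_0 = 1$ lets us subtract the constant $\mathbf{g}(\hat{\mathbf{x}}_0) = \mathbf{g}(\mathbb{E}[\mathbf{x}_0|\mathbf{x}_t])$ inside the integral.

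\emph{First}, I would pass the absolute value inside the integral (triangle inequality / Jensen for the modulus) to obtain
\begin{equation*}
\mathcal{J} \le \int \left\| \mathbf{g}(\mathbf{x}_0) - \mathbf{g}(\hat{\mathbf{x}}_0) \right\| p(\mathbf{x}_0|\mathbf{x}_t)\, d\mathbf{x}_0.
\end{equation*}
\emph{Next}, I would apply a first-order Lipschitz/mean-value estimate on $\mathbf{g}$: since $\mathbf{g}(\mathbf{x}) = \nabla_{\mathbf{x}}\nabla_W F(\mathbf{x};W)$ is assumed differentiable, $\|\mathbf{g}(\mathbf{x}_0) - \mathbf{g}(\hat{\mathbf{x}}_0)\| \le \|\nabla_{\mathbf{x}}\mathbf{g}(\mathbf{x})\| \cdot \|\mathbf{x}_0 - \hat{\mathbf{x}}_0\|$, where $\|\nabla_{\mathbf{x}}\mathbf{g}(\mathbf{x})\|$ is the (supremal) operator norm of the Jacobian serving as a Lipschitz constant. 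Substituting gives
\begin{equation*}
\mathcal{J} \le \|\nabla_{\mathbf{x}}\mathbf{g}(\mathbf{x})\| \int \|\mathbf{x}_0 - \hat{\mathbf{x}}_0(\mathbf{x}_t)\|\, p(\mathbf{x}_0|\mathbf{x}_t)\, d\mathbf{x}_0,
\end{equation*}
which already has the integral factor of the target bound; the remaining task is to produce the prefactor $\frac{n}{\sqrt{2\pi\sigma^2}}$.

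\emph{The key remaining step} is to account for the $\frac{n}{\sqrt{2\pi\sigma^2}}$ constant, which I expect arises from bounding the conditional density $p(\mathbf{x}_0|\mathbf{x}_t)$ itself. Via Bayes' rule, $p(\mathbf{x}_0|\mathbf{x}_t) \propto p(\mathbf{x}_t|\mathbf{x}_0)\, p(\mathbf{x}_0)$, and from the forward process in Equation~(\ref{ddim_ddpm_forward}) the likelihood $p(\mathbf{x}_t|\mathbf{x}_0)$ is Gaussian with variance $(1-\alpha_t)$ per coordinate, whose peak density in each of the $n$ dimensions contributes a factor of order $1/\sqrt{2\pi(1-\alpha_t)}$; together with the perturbation variance $\sigma^2$ entering through $\mathbf{g}_{\text{leaked}}$, the $n$-fold product over coordinates produces the dimensional factor $n$ and the $1/\sqrt{2\pi\sigma^2}$ normalization. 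I would make this precise by bounding the pointwise density and folding it into the integral estimate above.

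\emph{The main obstacle} I anticipate is rigorously justifying the emergence of the factor $\frac{n}{\sqrt{2\pi\sigma^2}}$ and linking $\sigma$ (the gradient-noise scale) correctly to the diffusion-posterior density, since the Jensen gap is intrinsically about $p(\mathbf{x}_0|\mathbf{x}_t)$ while $\sigma$ enters through the measurement model $\mathbf{g}_{\text{leaked}}$; a clean argument likely follows the density-bounding technique of the cited Jensen-gap reference \cite{pre-trained_diffusion}, which I would adapt here. I would also need the mild regularity assumption that $\mathbf{g}$ has bounded Jacobian (so $\|\nabla_{\mathbf{x}}\mathbf{g}(\mathbf{x})\|$ is finite) for the Lipschitz step to be valid.
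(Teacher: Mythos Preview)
Your structural approach---rewrite the Jensen gap as an expectation of differences, pass the norm inside, and apply a Lipschitz bound on $\mathbf{g}$---matches the skeleton of the argument the paper invokes. (The paper's own proof is essentially a one-line citation to Theorem~1 of \cite{b19}, which already yields the bound with an extra factor $\exp(-\tfrac{1}{2\sigma^{2}})$, and then drops that factor since $\exp(-\tfrac{1}{2\sigma^{2}})\le 1$.) The gap in your proposal is the origin of the prefactor $\tfrac{n}{\sqrt{2\pi\sigma^{2}}}$.

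That factor does \emph{not} come from bounding the diffusion posterior $p(\mathbf{x}_0\mid\mathbf{x}_t)$. In the cited reference the Jensen gap is taken for the function $\mathbf{x}_0\mapsto p(\mathbf{g}_{\text{leaked}}\mid\mathbf{x}_0)$, which under the Gaussian perturbation model is the Gaussian density $\phi_{\sigma^{2}}\bigl(\mathbf{g}_{\text{leaked}}-\nabla_W F(\mathbf{x}_0;W)\bigr)$. The constant $\tfrac{n}{\sqrt{2\pi\sigma^{2}}}\exp(-\tfrac{1}{2\sigma^{2}})$ is precisely the Lipschitz constant of this Gaussian density (this is Lemma~\ref{lemma3} in the appendix), and it enters through a \emph{second} Lipschitz step applied to $\phi_{\sigma^{2}}$ on top of the one you already wrote for $\mathbf{g}$:
\[
\bigl|\phi_{\sigma^{2}}(\mathbf{g}_{\text{leaked}}-\mathbf{g}(\mathbf{x}_0))-\phi_{\sigma^{2}}(\mathbf{g}_{\text{leaked}}-\mathbf{g}(\hat{\mathbf{x}}_0))\bigr|
\;\le\; L_{\phi}\,\|\mathbf{g}(\mathbf{x}_0)-\mathbf{g}(\hat{\mathbf{x}}_0)\|
\;\le\; L_{\phi}\,\|\nabla_{\mathbf{x}}\mathbf{g}(\mathbf{x})\|\,\|\mathbf{x}_0-\hat{\mathbf{x}}_0\|,
\]
with $L_{\phi}=\tfrac{n}{\sqrt{2\pi\sigma^{2}}}\exp(-\tfrac{1}{2\sigma^{2}})$. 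Your attempt to extract this constant from $p(\mathbf{x}_0\mid\mathbf{x}_t)$ via the forward kernel $p(\mathbf{x}_t\mid\mathbf{x}_0)$ would instead produce a factor depending on the diffusion variance $(1-\alpha_t)$, not on the gradient-noise scale $\sigma^{2}$, so that route cannot reach the stated bound.
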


The upper bound of the reconstruction error characterizes the worst-case reconstruction quality. The term $\int \| \mathbf{x}_0 - \hat{\mathbf{x}}_{0}(\mathbf{x}_{t}) \|p(\mathbf{x}_0 | \mathbf{x}_{t}) d\mathbf{x}_0$, which depends on the pre-trained diffusion model, is sensitive to the noise magnitude $\sigma^2$. Our analysis further identifies an architecture-dependent factor, $\| \nabla_{\mathbf{x}} \mathbf{g}(\mathbf{x}) \|=\| \nabla_{\mathbf{x}} \nabla_{W} F(\mathbf{x}; W) \|$. To quantify the vulnerability of a model $F$ to GIAs, we introduce the Reconstruction Vulnerability (RV) metric, defined as the Frobenius norm of the upper bound on this quantity as follows.

\begin{definition}[\textbf{Reconstruction Vulnerability (RV)}]
    Let $F(\mathbf{x};W)$ be a model parameterized by $W$, where $\mathbf{x}$ denotes the input. The Reconstruction Vulnerability ($\text{RV}$) of $F$ with respect to a private dataset $\mathbb{X}$ is formally defined as:
    \begin{equation}
        \text{RV} = \max_{\mathbf{x} \in \mathbb{X}} \| \nabla_{\mathbf{x}} \nabla_{W} F(\mathbf{x}; W) \|_F,
    \end{equation}
    where $\|\cdot \|_F$ denotes the Frobenius norm.
\label{defination of rv}
\end{definition}

% Appendix B in the extended version provides the computational details for efficiently evaluating $\text{RV}$.

A larger RV value indicates a greater sensitivity of the parameter gradients to input variations, which amplifies input-specific leakage and facilitates higher-fidelity GIAs. As directly computing the RV is computationally intensive, we project $\nabla_{W} F(\mathbf{z}; W)$ onto $M$ random orthogonal directions $\{\mathbf{v}_j\}_{j=1}^M$, where $\mathbf{v}_j \sim \mathbf{p}_\mathbf{v}$. Given a model $F$ trained on a dataset $\mathbb{X}=\{\mathbf{x}_i\}_{i=1}^{|\mathbb{X}|}, \mathbf{x}_i\sim \mathbf{p}_{\text{data}}$, RV can be estimated as:
\begin{equation}
\begin{split}
    \text{RV} &= \mathbb{E}_{\mathbf{v} \sim \mathbf{p}_\mathbf{v}} \mathbb{E}_{\mathbf{x} \sim \mathbf{p}_{\text{data}}(\mathbf{x})} \| \nabla_{\mathbf{x}} \left( \mathbf{v}^{T} \nabla_{W} F(\mathbf{x};W) \right)\| \\
    & \simeq \frac{1}{N} \frac{1}{M} \sum_{i=1}^{N} \sum_{j=1}^{M} \| \nabla_{\mathbf{x}} \left( \mathbf{v}_{j}^{T} \nabla_{W} F(\mathbf{x}_i;W) \right)\|
\end{split}
\end{equation}

We then examine the lower bound of the reconstruction error. As shown in Equation (\ref{xingxing}), the properties of the attack loss $\mathcal{L}$ influence the lower bound of the Jensen gap. Let $\mathcal{L}\left( {\mathbf{x}_t} \right)=\mathcal{L}\left( \mathbf{g}({\mathbf{x}_t}) \right)=\mathcal{L}\left( \mathbf{g}(\hat{\textbf{x}}_0(\mathbf{x}_t)), \mathbf{g}_{\text{leaked}} \right)$ for simplicity. We rely on Assumptions \ref{assu1}-\ref{assu4} to derive the lower bound.

\begin{assumption}
    $\mathcal{L}(\mathbf{g}(\mathbf{x}))$ is $\alpha$-smooth and $\beta$-convex with respect to $\mathbf{g}(\mathbf{x})$;
    \label{assu1}
\end{assumption}

\begin{assumption}
    $\mathbf{g}(\mathbf{x})$ is $L_g$-smooth with respect to $\mathbf{x}$;
    \label{assu3}
\end{assumption}

\begin{assumption}
    Jacobian $\mathbf{J}_\mathbf{g}(\mathbf{x}) = \nabla_{\mathbf{x}} \mathbf{g}(\mathbf{x})$ is full rank.
    \label{assu4}
\end{assumption}

\begin{theorem}[Lower Bound of the Reconstruction Error]
    Under conditions of Theorem \ref{UpperBoundJensen} and Assumptions \ref{assu1}-\ref{assu4}, for $\mathbf{x} \in \mathbb{R}^n$, the reconstruction error is lower bounded by:
    {
    \begin{equation*}
        \begin{split}
        & \mathcal{J}\left( \mathcal{L}\left( \mathbf{g}(\mathbf{x}),\mathbf{g}_{\text{leaked}} \right), p(\mathbf{x}_0 | \mathbf{x}_{t}) \right)\\
        &=|\mathbb{E}[\mathcal{L}(\mathbf{g}(\mathbf{x}_0),\mathbf{g}_{\text{leaked}})]-\mathcal{L}(\mathbf{g}(\mathbb{E}[\mathbf{x}_0]),\mathbf{g}_{\text{leaked}})| \\
        & \geq \frac{1}{2} \left( \beta \lambda_{min}\left( \mathbf{J}_\mathbf{g}(\mathbf{x})^T\mathbf{J}_\mathbf{g}(\mathbf{x}) \right) - \alpha L_g \right) \sum_{i=1}^{n} \sigma_{i}^{2},
        \end{split}
    \end{equation*}
    }
    where $\lambda_{\text{min}}(\cdot)$ represents the smallest eigenvalue of $\cdot$, $L_g$ can be refined to a specific value $\frac{n}{\sqrt{2 \pi \sigma^2}} \exp(-\frac{1}{2\sigma^2})$ under Gaussian noise-perturbed isotropic gradients, and $\{ \sigma_i^2 \}_{i=1}^n$ are the eigenvalues of the covariance matrix of $p(\mathbf{x}_0 | \mathbf{x}_{t})$.
\label{lowerboundDSG}
\end{theorem}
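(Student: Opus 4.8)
The plan is to treat the composite map $h(\mathbf{x}_0) := \mathcal{L}(\mathbf{g}(\mathbf{x}_0), \mathbf{g}_{\text{leaked}})$ as the function inside the Jensen gap, with the random variable $\mathbf{x}_0 \sim p(\mathbf{x}_0\mid\mathbf{x}_t)$ and posterior mean $\mu := \mathbb{E}[\mathbf{x}_0\mid\mathbf{x}_t] = \hat{\mathbf{x}}_0(\mathbf{x}_t)$, so that the quantity to bound is exactly $\mathcal{J} = \mathbb{E}[h(\mathbf{x}_0)] - h(\mu)$. My strategy is to show that $h$ inherits an effective strong-convexity modulus $m := \beta\lambda_{\min}(\mathbf{J}_\mathbf{g}^T\mathbf{J}_\mathbf{g}) - \alpha L_g$ from the assumptions, and then convert this curvature lower bound into a lower bound on the Jensen gap via a second-order Taylor expansion around $\mu$.

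First I would compute the Hessian of the composition by the chain rule, which splits into two pieces: $\nabla_\mathbf{x}^2 h = \mathbf{J}_\mathbf{g}^T(\nabla_\mathbf{g}^2\mathcal{L})\mathbf{J}_\mathbf{g} + \sum_a (\nabla_\mathbf{g}\mathcal{L})_a\,\nabla_\mathbf{x}^2 g_a$, i.e. a Gauss--Newton term and a residual-curvature term. For the first term, Assumption \ref{assu1} ($\beta$-convexity of $\mathcal{L}$) gives $\nabla_\mathbf{g}^2\mathcal{L} \succeq \beta I$, so for any unit vector $\mathbf{u}$ one has $\mathbf{u}^T\mathbf{J}_\mathbf{g}^T(\nabla_\mathbf{g}^2\mathcal{L})\mathbf{J}_\mathbf{g}\mathbf{u} \ge \beta\|\mathbf{J}_\mathbf{g}\mathbf{u}\|^2 \ge \beta\lambda_{\min}(\mathbf{J}_\mathbf{g}^T\mathbf{J}_\mathbf{g})$, where Assumption \ref{assu4} (full-rank Jacobian) guarantees this smallest eigenvalue is strictly positive. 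For the residual term I would bound its operator norm using the $\alpha$-smoothness of $\mathcal{L}$ (controlling $\|\nabla_\mathbf{g}\mathcal{L}\|\le\alpha$) together with the $L_g$-smoothness of $\mathbf{g}$ from Assumption \ref{assu3} (bounding the curvature tensor $\nabla_\mathbf{x}^2\mathbf{g}$), yielding $\|\sum_a(\nabla_\mathbf{g}\mathcal{L})_a\nabla_\mathbf{x}^2 g_a\|\le\alpha L_g$. Combining the two estimates gives the uniform curvature lower bound $\nabla_\mathbf{x}^2 h \succeq (\beta\lambda_{\min}(\mathbf{J}_\mathbf{g}^T\mathbf{J}_\mathbf{g}) - \alpha L_g)I = mI$.

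With this Hessian bound in hand, I would Taylor-expand $h$ around $\mu$ with Lagrange remainder, $h(\mathbf{x}_0) = h(\mu) + \nabla h(\mu)^T(\mathbf{x}_0-\mu) + \tfrac{1}{2}(\mathbf{x}_0-\mu)^T\nabla^2 h(\xi)(\mathbf{x}_0-\mu)$, and take expectations over $p(\mathbf{x}_0\mid\mathbf{x}_t)$. The linear term vanishes because $\mathbb{E}[\mathbf{x}_0-\mu]=0$ by definition of the posterior mean, so $\mathcal{J} = \tfrac{1}{2}\mathbb{E}[(\mathbf{x}_0-\mu)^T\nabla^2 h(\xi)(\mathbf{x}_0-\mu)] \ge \tfrac{1}{2}m\,\mathbb{E}\|\mathbf{x}_0-\mu\|^2$. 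Finally $\mathbb{E}\|\mathbf{x}_0-\mu\|^2 = \operatorname{tr}(\operatorname{Cov}(\mathbf{x}_0\mid\mathbf{x}_t)) = \sum_{i=1}^n\sigma_i^2$, which delivers the stated bound; when $m>0$ the gap is nonnegative, so the absolute value on the left-hand side is harmless. The Gaussian refinement $L_g = \frac{n}{\sqrt{2\pi\sigma^2}}\exp(-\frac{1}{2\sigma^2})$ would then follow by specializing the curvature estimate on $\mathbf{g}$ to the isotropic Gaussian likelihood, where the one-dimensional Gaussian density contributes the factor $\frac{1}{\sqrt{2\pi\sigma^2}}\exp(-\frac{1}{2\sigma^2})$ and the $n$ coordinates contribute the dimensional factor.

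I expect the main obstacle to be the control of the residual-curvature term $\sum_a(\nabla_\mathbf{g}\mathcal{L})_a\nabla_\mathbf{x}^2 g_a$: making the interplay between the two smoothness constants rigorous requires a careful tensor-norm argument and a precise reading of what \emph{$\alpha$-smooth} buys for $\|\nabla_\mathbf{g}\mathcal{L}\|$, since smoothness bounds the Hessian of $\mathcal{L}$ rather than its gradient directly, so one must either invoke a Lipschitz/bounded-gradient interpretation or localize near the loss minimizer $\mathbf{g}=\mathbf{g}_{\text{leaked}}$. A secondary technicality is ensuring the Hessian lower bound holds uniformly at the intermediate Taylor point $\xi$, which I would address by taking the infimum of $\lambda_{\min}(\mathbf{J}_\mathbf{g}^T\mathbf{J}_\mathbf{g})$ over the relevant segment, or by assuming $\mathbf{J}_\mathbf{g}$ is approximately constant on the high-probability support of $p(\mathbf{x}_0\mid\mathbf{x}_t)$.
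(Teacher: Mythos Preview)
Your proposal is correct and follows essentially the same approach as the paper: both compute the Hessian of the composite $h(\mathbf{x})=\mathcal{L}(\mathbf{g}(\mathbf{x}))$ via the chain rule, split it into the Gauss--Newton piece $\mathbf{J}_\mathbf{g}^T(\nabla_\mathbf{g}^2\mathcal{L})\mathbf{J}_\mathbf{g}$ and the residual curvature $\sum_i(\nabla_\mathbf{g}\mathcal{L})_i\nabla_\mathbf{x}^2 g_i$, bound the first from below by $\beta\lambda_{\min}(\mathbf{J}_\mathbf{g}^T\mathbf{J}_\mathbf{g})$ and the second in norm by $\alpha L_g$, and then convert the resulting strong-convexity modulus into a Jensen-gap lower bound via the trace of the posterior covariance. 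The only cosmetic difference is that the paper outsources the last step to a cited lemma (strong convexity $\Rightarrow$ Jensen gap $\ge\tfrac{1}{2}\beta'\sum_i\sigma_i^2$), whereas you derive it inline by Taylor expansion with vanishing linear term; your explicit flagging of the $\alpha$-smoothness-versus-gradient-bound issue is a concern the paper simply asserts away.
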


In Theorem \ref{lowerboundDSG}, $\lambda_{\text{min}}\left( \mathbf{J}_\mathbf{g}(\mathbf{x})^T\mathbf{J}_\mathbf{g}(\mathbf{x}) \right)$ can be interpreted as the information loss associated with the mapping $\mathbf{x} \mapsto \mathbf{g}$. This suggests that reconstructing an exact replica of the original private image from leaked gradients—even without explicit perturbation noise—is impossible. The degree of information loss is scaled by the randomness inherent in the DDIM's denoising steps, specifically by $\sum_{i=1}^{n} \sigma_{i}^{2}$. Moreover, Theorem \ref{lowerboundDSG} demonstrates that a larger Gaussian noise variance $\sigma^2$ monotonically decreases $L_g$, thus increasing the lower bound on the Jensen gap. This indicates a positive correlation between the reconstruction error and the noise scale.

\begin{figure*}[ht]
  \centering
  \subfigure[CelebA]{
		\includegraphics[width=0.32\textwidth]{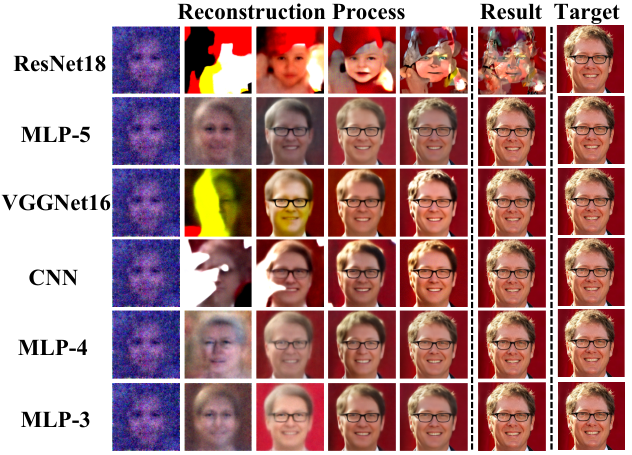}
    }
  \subfigure[LSUN]{
		\includegraphics[width=0.32\textwidth]{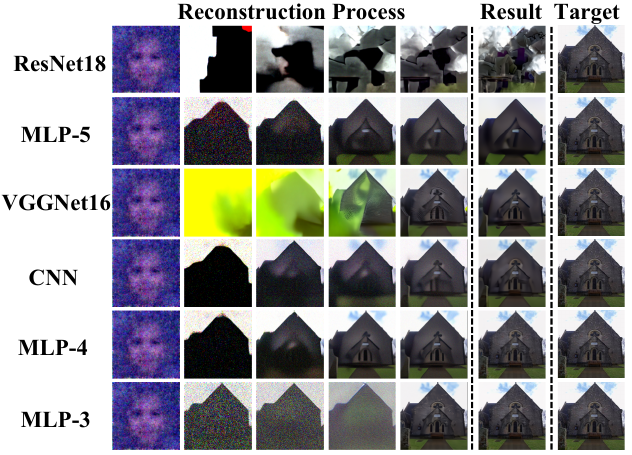}
    }
  \subfigure[ImageNet]{
		\includegraphics[width=0.32\textwidth]{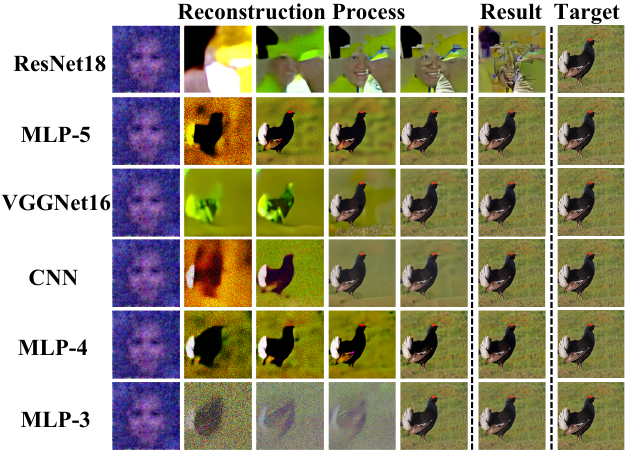}
    }
  \caption{Reconstruction process of GGSS-R across diverse datasets and attacked models. The \textit{Reconstruction Process} columns display the intermediate results $\hat{\mathbf{x}}_0(\mathbf{x}_t)$, with final reconstructions and target image presented in the \textit{Result} and \textit{Target} columns.}
  \label{dsg_more_dataset}
\end{figure*}

\subsection{Convergence of the Attack Loss}

In the context of GIA, a critical issue is whether the attack loss decreases as the reconstructed samples are generated progressively. Formally, we examine whether the sequence $\{ \mathcal{L}(\mathbf{x}_t) \}_{t=T}^{1}$ is monotonically decreasing. If this holds, it is essential to characterize the rate of this decrease. Specifically, Theorem \ref{upperDSG} states the monotonic decrease of the attack loss, providing a lower bound on its decrease rate, and Theorem \ref{DSGlower} derives an upper bound on this rate.

\begin{theorem}[Convergence of the Attack Loss]
    Assume that the attack loss $\mathcal{L}(\mathbf{x})$ is $\alpha^{\prime}$-smooth, $\beta^{\prime}$-strongly convex with respect to $\mathbf{x}$. When GIA is executed following Algorithm~\ref{alg:1}, the sequence $\{ \mathcal{L}(\mathbf{x}_{t}) \}_{t=T}^{1}$ monotonically decreases as the denoising steps progress from timestep $t=T$ to $1$.
    \label{upperDSG}
\end{theorem}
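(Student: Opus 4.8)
The plan is to treat the attack loss as the composite map $\mathbf{x}_t \mapsto \mathcal{L}(\mathbf{x}_t) = \mathcal{L}(\mathbf{g}(\hat{\mathbf{x}}_0(\mathbf{x}_t)),\mathbf{g}_{\text{leaked}})$ and to establish the one-step inequality $\mathcal{L}(\mathbf{x}_{t-1}) \le \mathcal{L}(\mathbf{x}_t)$ for every transition produced by line~7 of Algorithm~\ref{alg:1}; chaining it over $t = T, T-1, \dots, 1$ then yields monotonicity of $\{\mathcal{L}(\mathbf{x}_t)\}_{t=T}^{1}$. Writing $\nabla_t := \nabla_{\mathbf{x}_t}\mathcal{L}(\mathbf{x}_t)$ and $g_t := \|\nabla_t\|$, I would first record the GGSS update as $\mathbf{x}_{t-1} - \mathbf{x}_t = (\mu_\theta(\mathbf{x}_t,t) - \mathbf{x}_t) - \sqrt{n}\sigma_t\, \nabla_t/g_t$, separating the unconditional DDIM drift from the normalized gradient correction. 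The correction is exactly the minimizer of the linearized objective $\langle \nabla_t, \mathbf{x}_{t-1} - \mathbf{x}_t\rangle$ over the confidence sphere $\|\mathbf{x}_{t-1} - \mu_\theta\| = \sqrt{n}\sigma_t$ of Equation~(\ref{xingxing}), hence the steepest feasible descent direction; this optimality is what produces the leading negative term.

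Next I would invoke the $\alpha'$-smoothness descent lemma,
\begin{equation*}
\mathcal{L}(\mathbf{x}_{t-1}) \le \mathcal{L}(\mathbf{x}_t) + \langle \nabla_t, \mathbf{x}_{t-1} - \mathbf{x}_t\rangle + \tfrac{\alpha'}{2}\|\mathbf{x}_{t-1} - \mathbf{x}_t\|^2,
\end{equation*}
and substitute the update. The inner-product term splits as $\langle\nabla_t, \mu_\theta - \mathbf{x}_t\rangle - \sqrt{n}\sigma_t g_t$, the second summand being the dominant descent contribution since $\langle \nabla_t, -\nabla_t/g_t\rangle = -g_t$. It then remains to show that $-\sqrt{n}\sigma_t g_t$ outweighs both the drift inner product and the quadratic remainder, which I would bound by $\tfrac{\alpha'}{2}\|\mathbf{x}_{t-1}-\mathbf{x}_t\|^2 \le \tfrac{\alpha'}{2}(\|\mu_\theta-\mathbf{x}_t\| + \sqrt{n}\sigma_t)^2$.

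To guarantee a strictly negative net change I would use $\beta'$-strong convexity twice. First, the Polyak--{\L}ojasiewicz inequality $g_t^2 \ge 2\beta'(\mathcal{L}(\mathbf{x}_t) - \mathcal{L}^{\ast})$ keeps $g_t$ bounded away from zero until convergence, so the descent term $-\sqrt{n}\sigma_t g_t$ stays substantial and furnishes the quantitative lower bound on the per-step decrease rate referenced in the surrounding text. Second, I would bound the drift $\|\mu_\theta(\mathbf{x}_t,t) - \mathbf{x}_t\|$ in terms of the schedule coefficients $\alpha_{t-1},\alpha_t,\sigma_t$ and the predicted noise, using the explicit forms of $\hat{\mathbf{x}}_0$ and $\mu_\theta$ from lines~4--5; since both $\sigma_t$ and the schedule gaps shrink as $t\to 1$, the drift and the quadratic remainder are of smaller order than $\sqrt{n}\sigma_t g_t$, which closes the one-step inequality.

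The main obstacle I anticipate is controlling the DDIM drift inner product $\langle \nabla_t, \mu_\theta(\mathbf{x}_t,t) - \mathbf{x}_t\rangle$, whose sign is not fixed a priori and which is absent from a textbook gradient-descent analysis. My plan is to absorb it either through a Cauchy--Schwarz bound $|\langle\nabla_t, \mu_\theta - \mathbf{x}_t\rangle| \le g_t\,\|\mu_\theta - \mathbf{x}_t\|$ combined with the schedule-dependent smallness of $\|\mu_\theta - \mathbf{x}_t\|$, or by arguing that the unconditional step drives $\hat{\mathbf{x}}_0(\mathbf{x}_t)$ toward the data manifold and thus does not increase the loss; either route reduces the claim to the elementary requirement that the effective step size $\sqrt{n}\sigma_t$ be small relative to $1/\alpha'$, a condition the decreasing denoising schedule satisfies as sampling proceeds toward $t=1$.
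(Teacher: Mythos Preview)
Your opening move—the $\alpha'$-smoothness descent lemma—matches the paper exactly, and your decomposition of $\mathbf{x}_{t-1}-\mathbf{x}_t$ into the DDIM drift $\mu_\theta-\mathbf{x}_t$ plus the normalized-gradient correction is the right way to read the update. The gap appears in how you close the inequality. You try to control each piece of
\[
\langle\nabla_t,\mu_\theta-\mathbf{x}_t\rangle \;-\; \sqrt{n}\,\sigma_t\, g_t \;+\; \tfrac{\alpha'}{2}\|\mathbf{x}_{t-1}-\mathbf{x}_t\|^2
\]
separately, and you correctly flag the drift inner product as the obstruction. But both of your proposed fixes—Cauchy--Schwarz followed by ``schedule-dependent smallness of $\|\mu_\theta-\mathbf{x}_t\|$'', or an informal argument that the unconditional step moves toward the data manifold and so does not raise the loss—import hypotheses the theorem does not grant. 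The statement assumes only $\alpha'$-smoothness and $\beta'$-strong convexity of $\mathcal{L}$; there is no control on $\|\mu_\theta(\mathbf{x}_t,t)-\mathbf{x}_t\|$, and no step-size condition $\sqrt{n}\,\sigma_t\lesssim 1/\alpha'$ of the kind you end up requiring. As written, your plan would prove a conditional version of the theorem with extra assumptions on the schedule and the denoiser, not the theorem as stated.

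The paper does not try to bound the drift at all. After the same descent lemma it exploits the \emph{defining} property of the GGSS update, namely that $\mathbf{x}_{t-1}$ is the minimizer of the linearization $\langle\nabla_t,\mathbf{x}-\mathbf{x}_t\rangle$ over the sphere $S_{\mu_\theta,\sqrt{n}\sigma_t}^n$. It then manufactures a comparison point $\mathbf{x}_{t-1}^{+}$ (a scalar multiple of $\nabla_t$) chosen so that the linear term evaluated at $\mathbf{x}_{t-1}^{+}$ exactly cancels the fixed quadratic remainder $\tfrac{\alpha'}{2}\|\mathbf{x}_{t-1}-\mathbf{x}_t\|^2$; replacing $\mathbf{x}_{t-1}$ by $\mathbf{x}_{t-1}^{+}$ in the linear term can only increase it by optimality, and the upper bound collapses to zero. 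The drift is thus absorbed through the argmin property rather than estimated. A smaller point: the paper's monotonicity argument uses only $\alpha'$-smoothness; your Polyak--{\L}ojasiewicz step via $\beta'$ is unnecessary for Theorem~\ref{upperDSG} and belongs to the rate analysis in Theorem~\ref{DSGlower}.
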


\begin{theorem}[Upper Bound on the Convergence Rate of the Attack Loss]
    Under Assumptions \ref{assu1}-\ref{assu4}, and assume that the attack loss $\mathcal{L}(\mathbf{x})$ is $\alpha^{\prime}$-smooth, $\beta^{\prime}$-strongly convex with respect to $\mathbf{x}$. The optimization using Algorithm \ref{alg:1} satisfies:
        \begin{equation*}
        \begin{split}
            \mathcal{L}(\mathbf{x}_{t-1}) - \mathcal{L}(\mathbf{x}_{t})& \geq  \frac{\beta^{\prime} n \sigma_t^2}{2} +\beta^{\prime} k \sqrt{n} \sigma_t + \frac{\beta^{\prime} k^2}{2}\\
            -  ( k  + &  \sqrt{n} \sigma_t) (\sqrt{\lambda_{\text{max}}(\mathbf{J}_\mathbf{g}(\mathbf{x})\mathbf{J}_\mathbf{g}(\mathbf{x})^T)} \cdot L_g),
        \end{split}
        \end{equation*}
    where $n$ is $\mathbf{x}$'s dimension, $k$ is a positive constant, $\lambda_{\text{max}}(\cdot)$ is the largest eigenvalue of $\cdot$, and $L_g = \frac{n}{\sqrt{2 \pi \sigma^2}} \exp(-\frac{1}{2\sigma^2})$.
    \label{DSGlower}
\end{theorem}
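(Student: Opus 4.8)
The plan is to apply the second-order characterization of $\beta^{\prime}$-strong convexity directly to the two consecutive iterates produced by Algorithm~\ref{alg:1} and then control the two resulting terms using the explicit form of the GGSS update in Equation~(\ref{xingxing}). Since $\mathcal{L}(\mathbf{x})$ is $\beta^{\prime}$-strongly convex, for the iterates $\mathbf{x}_{t-1},\mathbf{x}_t$ we have
\begin{equation*}
\mathcal{L}(\mathbf{x}_{t-1}) - \mathcal{L}(\mathbf{x}_t) \geq \langle \nabla\mathcal{L}(\mathbf{x}_t),\, \mathbf{x}_{t-1} - \mathbf{x}_t \rangle + \frac{\beta^{\prime}}{2}\|\mathbf{x}_{t-1} - \mathbf{x}_t\|^2 .
\end{equation*}
This reduces the theorem to two sub-tasks: a lower bound on the quadratic term, which will produce the positive part of the claimed inequality, and a lower bound on the inner-product term, which will produce the subtracted part.

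For the quadratic term I would substitute the update, writing $\mathbf{x}_{t-1} - \mathbf{x}_t = (\mu_\theta(\mathbf{x}_t,t) - \mathbf{x}_t) - \sqrt{n}\sigma_t\,\hat d$, where $\hat d$ is the unit-norm gradient-guidance direction and $k := \|\mu_\theta(\mathbf{x}_t,t) - \mathbf{x}_t\|$ denotes the magnitude of the deterministic DDIM drift. Treating the drift and the guidance direction as aligned (both steer toward lower loss), the step length becomes $\|\mathbf{x}_{t-1}-\mathbf{x}_t\| = k + \sqrt{n}\sigma_t$, so the quadratic term expands to $\frac{\beta^{\prime}}{2}(k+\sqrt{n}\sigma_t)^2 = \frac{\beta^{\prime} n\sigma_t^2}{2} + \beta^{\prime} k\sqrt{n}\sigma_t + \frac{\beta^{\prime} k^2}{2}$, recovering exactly the first three terms of the stated bound. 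For the inner-product term I would apply Cauchy--Schwarz, $\langle \nabla\mathcal{L}(\mathbf{x}_t), \mathbf{x}_{t-1}-\mathbf{x}_t\rangle \geq -\|\nabla\mathcal{L}(\mathbf{x}_t)\|\,(k+\sqrt{n}\sigma_t)$, and then bound the gradient norm through the chain rule $\nabla_{\mathbf{x}}\mathcal{L} = \mathbf{J}_\mathbf{g}(\mathbf{x})^T\nabla_\mathbf{g}\mathcal{L}$. The operator norm of the Jacobian gives $\|\nabla_\mathbf{x}\mathcal{L}\| \leq \sqrt{\lambda_{\max}(\mathbf{J}_\mathbf{g}(\mathbf{x})\mathbf{J}_\mathbf{g}(\mathbf{x})^T)}\,\|\nabla_\mathbf{g}\mathcal{L}\|$, and the $L_g$-smoothness of $\mathbf{g}$ (Assumption~\ref{assu3}), refined to $L_g = \frac{n}{\sqrt{2\pi\sigma^2}}\exp(-\frac{1}{2\sigma^2})$ under isotropic Gaussian-perturbed gradients as in Theorem~\ref{lowerboundDSG}, controls the loss-side factor, yielding $\|\nabla\mathcal{L}(\mathbf{x}_t)\| \leq \sqrt{\lambda_{\max}(\mathbf{J}_\mathbf{g}(\mathbf{x})\mathbf{J}_\mathbf{g}(\mathbf{x})^T)}\cdot L_g$. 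Combining this with the expanded quadratic term gives precisely the claimed inequality.

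I expect the \textbf{main obstacle} to be the alignment step in the quadratic bound: the identity $\|\mathbf{x}_{t-1}-\mathbf{x}_t\| = k + \sqrt{n}\sigma_t$ holds only when the deterministic drift $\mu_\theta(\mathbf{x}_t,t)-\mathbf{x}_t$ is colinear with the normalized guidance step, whereas the triangle inequality alone only yields $\|\mathbf{x}_{t-1}-\mathbf{x}_t\| \geq |k - \sqrt{n}\sigma_t|$. Making this rigorous requires either arguing colinearity (both directions descending along $\nabla\mathcal{L}$ near the optimum) or explicitly tracking the cross term $-\beta^{\prime}\langle \mu_\theta(\mathbf{x}_t,t)-\mathbf{x}_t,\ \sqrt{n}\sigma_t\hat d\rangle$ and absorbing it into the definition of the constant $k$. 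A secondary subtlety is that the guidance gradient is taken with respect to $\mathbf{x}_t$ but evaluated through the posterior-mean map $\hat{\mathbf{x}}_0(\mathbf{x}_t)$; I would need to verify that propagating the extra Jacobian factor $\partial\hat{\mathbf{x}}_0/\partial\mathbf{x}_t$ through the chain rule does not alter the stated product bound $\sqrt{\lambda_{\max}(\mathbf{J}_\mathbf{g}(\mathbf{x})\mathbf{J}_\mathbf{g}(\mathbf{x})^T)}\cdot L_g$, or else fold its scaling into $L_g$.
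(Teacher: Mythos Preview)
Your plan matches the paper's argument essentially step for step: the paper also starts from the $\beta'$-strong-convexity lower bound, splits $\mathbf{x}_{t-1}-\mathbf{x}_t$ as $(\mathbf{x}_{t-1}-\mu_\theta)+(\mu_\theta-\mathbf{x}_t)$ with $\mathbf{x}_{t-1}-\mu_\theta=-\sqrt{n}\sigma_t d^*$, and then bounds $\|\nabla_{\mathbf{x}}\mathcal{L}\|$ via the chain rule and the Jacobian operator norm exactly as you propose (this is isolated as a separate lemma in the paper). The obstacle you flag is real and is handled in the paper precisely the way you anticipate: the authors \emph{set} $\mu_\theta(\mathbf{x}_t,t)-\mathbf{x}_t=-k d^*$ by appeal to the randomness of $\mu_\theta$, calling it ``a value that is demonstrably achievable'' rather than proving it holds generically; so the theorem is really a bound in the aligned case, and your instinct to interpret $k$ this way is exactly what the paper does.
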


As discussed in Theorem~\ref{lowerboundDSG}, the private information loss associated with the mapping $\mathbf{x} \mapsto \mathbf{g}$ is quantified by the term $\sqrt{\lambda_{\text{max}}(\mathbf{J}_\mathbf{g}(\mathbf{x})\mathbf{J}_\mathbf{g}(\mathbf{x})^T)}$. The convergence rate of the attack loss is further influenced by $L_g$, with $L_g \leq \frac{n}{\sqrt{2 \pi \sigma^2}}$. Consequently, increasing the noise variance $\sigma^2$ tightens the lower bound on $\mathcal{L}(\mathbf{x}_{t-1}) - \mathcal{L}(\mathbf{x}_{t})$, which, in turn, slows the convergence of attack loss and deteriorates reconstruction quality.

\section{Experiments}
\subsection{Experimental Setup}

\subsubsection{Datasets and Models.} Experiments were conducted on three datasets: CelebA \cite{celeba}, LSUN \cite{yu2015lsun}, and ImageNet \cite{deng2009imagenet}, with images of $256 \times 256$ resolution. The attacked models include MLPs with 3/4/5 layers (MLP-3/4/5), CNN, ResNet18 \cite{he2016deep}, and VGGNet16 \cite{simonyan2014very}. The pre-trained diffusion model, sourced from \cite{pre-trained_diffusion}, was pre-trained on the FFHQ face dataset \cite{karras2017progressive}, processing $256 \times 256$ resolution images.

\subsubsection{Hyperparameters.} We employ conditional reverse sampling in Equation (\ref{equation_dsg}), a common practice. We set a gradient acquisition batch size of $1$ and a guidance rate $m_r$ of $0.20$. The attack loss is measured using the Euclidean distance. All $\text{RV}$ values are computed with $M=1000$ and $N = 310$.

\subsubsection{Evaluation Metrics.} Reconstruction quality is evaluated using three widely adopted computer vision metrics: MSE, PSNR, and LPIPS, which quantify the similarity between the reconstructed samples and the original target images.

\subsubsection{Baselines.} We compared our GGSS-R with state-of-the-art baselines, including DLG \cite{b5}, IG \cite{b7}, GI \cite{b8}, GIAS \cite{b10} , GGL \cite{b9}, GIFD \cite{b11}, and FinetunedDiff \cite{b25}.

More detailed experimental settings are provided in Appendix B.

\subsection{Experimental Results}

\subsubsection{GIA from Original Gradients and RV Metric Validation.}

This section presents reconstruction results obtained with original, unperturbed gradients. For each attacked model, we first compute its RV value and then perform our GGSS-R attack. Figure \ref{dsg_more_dataset} visually illustrates the reconstruction results, with attacked models ordered by increasing RV values. Table \ref{RV_table} further quantifies the reconstruction performance.

\begin{table}[ht]
    \centering
    \resizebox{0.48\textwidth}{!}{
    \begin{tabular}{llcccc}
    \hline
    \toprule
       Dataset & Model& RV & MSE$\downarrow$& PSNR$\uparrow$ & LPIPS$\downarrow$\\
    \midrule
             \multirow{6}{*}{CelebA} & ResNet18 & 0.17 & 9.46e-2  & 10.79 & 4.14e-4\\
            & MLP-5 & 8.71 & 3.13e-5  & 45.05 & 9.84e-8 \\
            & VGGNet16 & 10.74 & 1.26e-4 & 40.76 & 2.43e-7 \\
            & CNN & 18.51  & 7.72e-5 & 41.12 & 2.05e-7\\
            & MLP-4 & 20.85 & 3.08e-5 & 45.12 & 8.92e-8\\ 
            & MLP-3 & 52.73 & 4.55e-5  & 43.42 & 1.60e-7\\
            \cline{1-6}
              \multirow{6}{*}{LSUN} & ResNet18 & 0.17 & 1.00e-1  & 9.99 & 3.10e-4\\
            & MLP-5 & 8.71 & 2.27e-3  & 26.44 & 3.09e-5 \\
            & VGGNet16 & 10.74 & 6.63e-3 & 22.79 & 4.88e-5 \\
            & CNN & 18.51  & 2.54e-3 & 25.95 & 2.44e-5\\
            & MLP-4 & 20.85 & 3.61e-4 & 34.43 & 5.60e-7\\ 
            & MLP-3 & 52.73 & 1.48e-4& 38.29 & 4.03e-7\\ 
            \cline{1-6}
             \multirow{6}{*}{ImageNet} & ResNet18 & 0.17 & 4.45e-2   & 13.52 & 2.54e-4\\
            & MLP-5 & 8.71 & 3.34e-4 & 25.35 & 3.58e-5 \\
            & VGGNet16 & 10.74 & 1.07e-3 & 29.71 & 1.48e-6 \\
            & CNN & 18.51  & 1.15e-4 & 27.39 & 2.55e-5\\
            & MLP-4 & 20.85 & 3.09e-4 & 25.10& 4.05e-5\\ 
            & MLP-3 & 52.73 & 1.05e-4 & 28.00 & 1.63e-5\\    
    \bottomrule
    \end{tabular}}
    \caption{Quantitative analysis of the reconstruction quality across various datasets and attacked models.}
    \label{RV_table}
\end{table}

Based on these results, we highlight two key findings: (i) Robustness across diverse distributions. Despite diffusion models being pre-trained on the FFHQ face dataset, GGSS-R effectively reconstructs high-fidelity images across diverse distributions. This demonstrates our attack's robustness, enabling efficient GIAs without requiring prior knowledge of the target data distribution—a common limitation in existing generative model-based approaches; (ii) RV as an effective indicator of a model's vulnerability to GIAs. Models with higher RV values consistently exhibit higher reconstruction quality, indicating an elevated privacy risk. Specifically, we observe a convex increase in reconstruction performance with RV. This finding suggests designing models with lower RV values, as they inherently present lower gradient privacy risks. Furthermore, architectural modifications to low-RV models warrant particular caution, as even minor increases in RV can significantly increase privacy risks.

\subsubsection{GIA from Gaussian Noise-Perturbed Gradients.}

When the leaked gradient is perturbed by Gaussian noise, the target image corresponding to $\mathbf{g}_{\text{leaked}}$ deviates from the original target. We define this distorted target as $\mathbf{x}_{\text{noisy}}$. This section investigates the effectiveness of the added noise in impeding our attack and validates our theoretical findings.

Figure \ref{noisy_celeba_process} illustrates the reconstruction process on the CelebA dataset under varying Gaussian noise magnitudes $\sigma^2$ added to CNN gradients. It reveals that the PSNR between the intermediate reconstructions $\hat{\mathbf{x}}_0(\mathbf{x}_t)$ and the original target image $\mathbf{x}_{\text{private}}$ exhibits a non-monotonic behavior, peaking before subsequently declining. This phenomenon is attributed to the inherent denoising capability of diffusion models. Rather than directly converging from a randomly initialized dummy image to the distorted target $\mathbf{x}_{\text{noisy}}$, the reconstruction process produces smoother intermediate results that are more accurate approximations of $\mathbf{x}_{\text{private}}$. However, as reconstruction iterations progress, diffusion model’s denoising capacity becomes insufficient to counteract the noise inherent in $\mathbf{x}_{\text{noisy}}$. Consequently, in later stages, the reconstruction converges toward $\mathbf{x}_{\text{noisy}}$ and deviates from $\mathbf{x}_{\text{private}}$, which explains the decline in GIA performance. The quantitative results in Table \ref{reconstruction_noisy_table} corroborate these observations.

\begin{figure}[ht]
  \centering
    \includegraphics[width=0.33\textwidth]{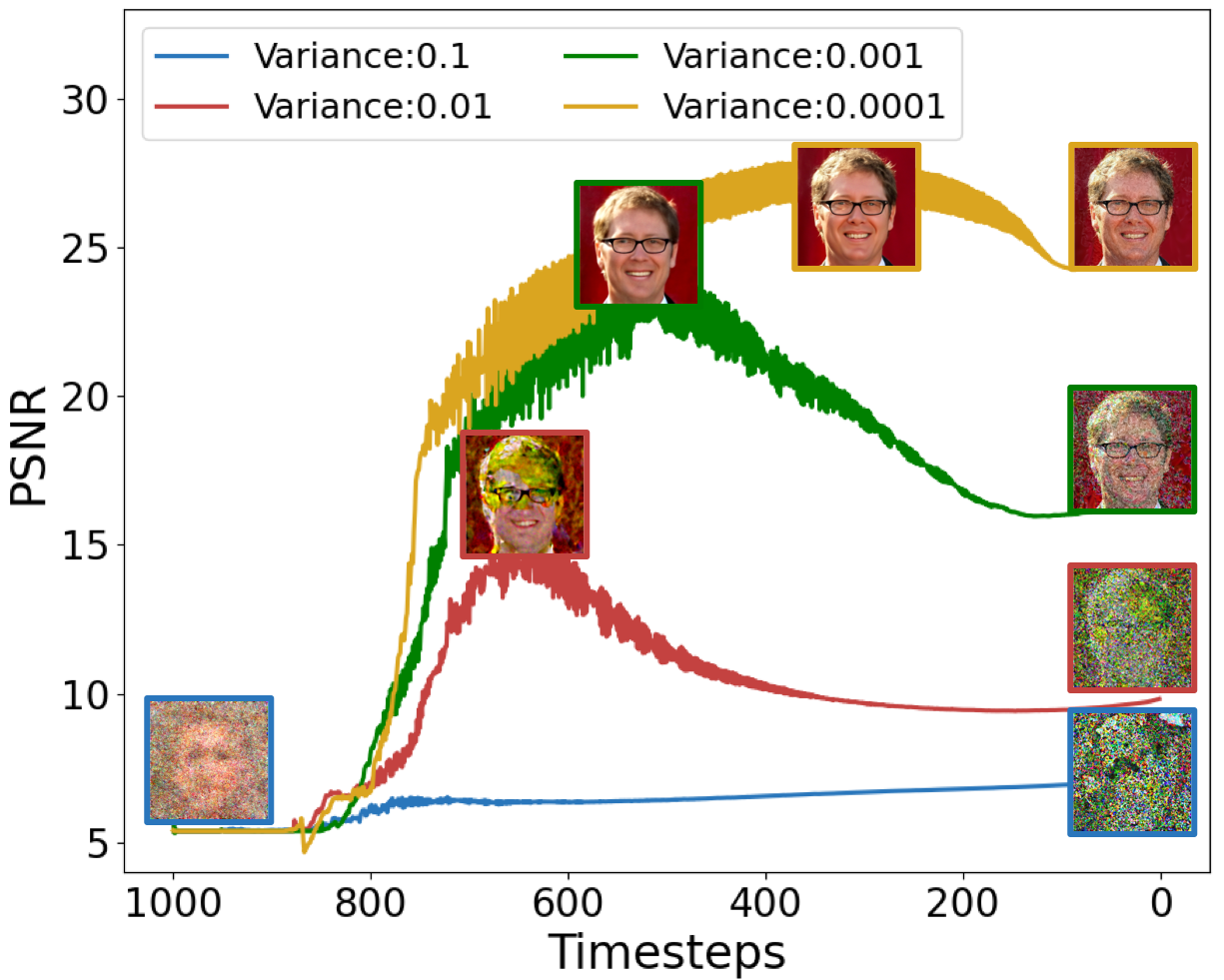}
  \caption{Reconstruction processes on CelebA with Gaussian noise-perturbed CNN gradients. Reconstructions at peak performance and upon GIA completion are displayed.}
  \label{noisy_celeba_process}
\end{figure}

\begin{table}[!htb]
    \centering
    \resizebox{0.4\textwidth}{!}{
    \begin{tabular}{cccc}
    \hline
    \toprule
      Gaussian Variance&  MSE$\downarrow$  & PSNR$\uparrow$ & LPIPS$\downarrow$\\
    \midrule
     $10^{-4}$ & 0.0016  & 28.0343 & 7.9935e-6\\
            $10^{-3}$ & 0.0039 & 24.0567  &  4.5263e-5\\
             $10^{-2}$ & 0.0366 & 14.9784  &  2.4762e-4 \\
             $10^{-1}$ & 0.1252  & 9.0236 & 7.1237e-4\\ 
    \bottomrule
    \end{tabular}}
    \caption{Quantitative analysis on reconstruction quality on CelebA with Gaussian noise-perturbed CNN gradients. Metrics are reported between the intermediate reconstructions $\hat{\mathbf{x}}_0(\mathbf{x}_t)$ at peak performance and the original target $\textbf{x}_{\text{private}}$.}
    \label{reconstruction_noisy_table}
\end{table}

Experimental results further corroborate our theoretical findings: (i) Effect of noise scale on reconstruction error. Reconstruction quality degrades noticeably with increasing noise levels. This observation aligns with our theoretical analysis (Theorems \ref{UpperBoundJensen}-\ref{lowerboundDSG}), which establishes that larger noise levels make accurate reconstruction provably more challenging; (ii) Reduction of attack loss during the reconstruction process. According to Theorem \ref{upperDSG}, the attack loss in Algorithm \ref{alg:1} decreases monotonically as the process evolves from random noise towards the original target $\textbf{x}_{\text{private}}$, achieving optimal reconstruction upon convergence. In our implementation, line 7 of Algorithm \ref{alg:1} is modified to Equation (\ref{equation_dsg}), leveraging the DDIM generative prior to improve reconstruction smoothness and quality. This modification yields earlier PSNR peaks, though reconstruction quality eventually declines due to the noise-induced gradient perturbations; (iii) Convergence rate of attack loss. Theorem \ref{DSGlower} states that the attack loss reduction rate is governed by the noise scale: higher noise levels yield smaller decreases $\mathcal{L}(\mathbf{x}_{t-1})-\mathcal{L}(\mathbf{x}_t)$, leading to slower convergence and poorer final reconstructions. This trend is empirically validated in Figure \ref{noisy_celeba_process}, which exhibits a clear correlation between increased noise variance and both slower PSNR improvement and reduced reconstruction fidelity.

\subsubsection{Comparison with Baselines.}

To demonstrate the superior performance of GGSS-R, we conducted a comprehensive comparison with state-of-the-art baselines on CelebA, using CNN as the attacked model. Table \ref{reconstruction_baseline_table} reports the quantitative results of GIAs from the original gradients. As shown in Table \ref{reconstruction_baseline_table}, our approach successfully guides the diffusion model to reconstruct high-resolution images with markedly improved fidelity, underscoring the heightened privacy risks when adversaries gain access to the original gradients.

\begin{table}[!htb]
    \centering
    \resizebox{0.47\textwidth}{!}{
    \begin{tabular}{lccc}
        \toprule
        Method & MSE$\downarrow$& PSNR$\uparrow$ & LPIPS$\downarrow$ \\
        \midrule
        DLG \cite{b5} & 0.0480& 13.1835 & 2.3453e-4\\
        IG \cite{b7} & 0.0196 & 17.0756 & 1.1612e-4 \\
        GI \cite{b8} & 0.0223  & 16.5109 & 1.4274e-4\\
        GIAS \cite{b10} & 0.0458  & 13.3885 & 3.9351e-4 \\
        GGL \cite{b9}& 0.0179& 17.4923 & 1.1937e-4 \\
        GIFD \cite{b11}& 0.0098 & 20.0534 &6.6672e-5 \\
        FinetunedDiff \cite{b25}& 0.0011 & 29.4490 & 5.4441e-6 \\
        GGSS-R (Ours)& \textbf{0.0001} & \textbf{41.1229} & \textbf{2.0519e-7} \\
        \bottomrule
    \end{tabular}}
    \caption{Quantitative analysis of baselines and our method with original, unperturbed gradients.}
    \label{reconstruction_baseline_table}
\end{table}

We further extended the comparison to assess reconstruction performance using CNN gradients perturbed by Gaussian noise ($\sigma^2 = 0.01$). As detailed in Table \ref{noisy_baseline_table}, our methods demonstrate improved robustness against noise-perturbed gradients. Specifically, GGSS-R achieves the highest PSNR, surpassing the optimal baseline, GGL, by $1.0802$ dB. These results underscore that simple noise addition is insufficient to ensure gradient privacy, highlighting the need for more sophisticated mitigation strategies in practical applications.

\begin{table}[!htb]
    \centering
    \resizebox{0.47\textwidth}{!}{
    \begin{tabular}{lccc}
        \toprule
        Method & MSE$\downarrow$ & PSNR$\uparrow$ & LPIPS$\downarrow$\\
        \midrule
        DLG \cite{b5} & 0.2194& 6.5871 & 1.8540e-3\\
        IG \cite{b7} & 0.0747 & 11.2677 & 3.2103e-4 \\
        GI \cite{b8} & 0.0892 & 10.4968 & 5.6760e-4\\
        GIAS \cite{b10} & 0.0613 & 12.1276 & 3.4494e-4 \\
        GGL \cite{b9}& 0.0394& 13.8982 & 2.5247e-4 \\
        GIFD \cite{b11}& 0.0425 &  13.7118 &2.5901e-4 \\
        FinetunedDiff \cite{b25}& 0.0683 & 11.6675 & 3.0877e-4 \\
        GGSS-R (Ours)& \textbf{0.0366} & \textbf{14.9784} & \textbf{2.4762e-4}\\
        \bottomrule
    \end{tabular}}
    \caption{Quantitative analysis of baselines and our method with Gaussian noise ($\sigma^2=0.01$)-perturbed gradients.}
    \label{noisy_baseline_table}
\end{table}

\subsection{Ablation Studies}

\subsubsection{Impact of Batch Size.}

We analyze the effect of batch size $\mathcal{B}$ for gradient acquisition on GIA performance, using the CelebA dataset. As illustrated in Table \ref{batchsize}, reconstruction quality consistently degrades as the batch size increases across all GIA methods. This deterioration stems from the averaging effect inherent to batching, which obscures image-specific gradient information and consequently reduces the reconstruction fidelity. Nevertheless, GGSS-R consistently outperforms all baselines at every batch size, benefiting from the strong generative capacity of diffusion models coupled with pixel-level conditional guidance.

\begin{table}[!htb]
    \centering
    \resizebox{0.48\textwidth}{!}{
    \begin{tabular}{lcccc}
        \toprule
        Method & $\mathcal{B}=1$& $\mathcal{B}=2$ & $\mathcal{B}=4$ & $\mathcal{B}=8$ \\
        \midrule
        DLG \cite{b5} & 13.1835& 11.0356 &9.9452 &8.3562 \\
        IG \cite{b7} & 17.0756& 15.2562 &11.3003 &9.5382 \\
        GI \cite{b8}  &16.5109 & 14.8351 &11.0352 &9.2350\\
        GIAS \cite{b10}  & 13.3885& 11.4684 &10.2465 &8.9462\\
        GGL \cite{b9} & 17.4923& 15.5526 & - & -\\
        GIFD \cite{b11} &20.0534 & 18.3562 &13.9636 & 10.3425\\
        FinetunedDiff \cite{b25} &29.4490 & 21.4583 & 15.6387& 11.4570\\
        GGSS-R (Ours) &\textbf{41.1229} & \textbf{24.4602} &\textbf{17.4791} & \textbf{12.1425}\\
        \bottomrule
    \end{tabular}}
    \caption{Impact of batch size for gradient acquisition on reconstruction quality. Note that StyleGAN2's latent vector contains a large number of parameters and the CMA-ES optimizer used in GGL does not support large-scale optimization. As a result, GGL is limited to batch sizes $\mathcal{B} \le 2$.
    }
    \label{batchsize}
\end{table}

\subsubsection{Impact of Guidance Rate.}

Table \ref{scale_DSG} reports the effect of the guidance rate $m_r$ on reconstruction quality. The guidance rate, explored over the range $0.01$–$0.80$, governs the balance between conditional guidance and generated image quality. The optimal GIA performance is achieved at $m_r = 0.20$, yielding the lowest MSE ($7.7215e-5$), the highest PSNR ($41.1230$) and the lowest LPIPS ($2.0519e-7$). 

\begin{table}[!htp]
    \centering
    \resizebox{0.38\textwidth}{!}{
    \begin{tabular}{cccc}
    \hline
    \toprule
        Guidance Rate & MSE$\downarrow$ & PSNR$\uparrow$ & LPIPS$\downarrow$ \\
    \midrule
             0.01 & 2.0406e-3  & 26.9023 & 6.9703e-6\\
             0.10 & 2.1778e-4  & 36.6198 & 4.1730e-7\\
             0.20 & \textbf{7.7215e-5}  & \textbf{41.1229} & \textbf{2.0519e-7}\\
             0.30 & 1.4769e-4  & 38.3065 & 2.8713e-7\\
             0.40 & 1.0778e-4 & 39.6746 & 1.9574e-7\\
             0.60 & 2.0407e-4  & 36.9023 & 9.0661e-7\\
             0.80 & 2.7478e-4 & 35.6101 & 1.1251e-6\\
    \bottomrule
    \end{tabular}}
    \caption{Impact of guidance rate on reconstruction quality.}
    \label{scale_DSG}
\end{table}

\subsubsection{Impact of Noise Type.}

To evaluate how noise type influences reconstruction performance in noise-based defenses, we varied the magnitude of Laplacian noise applied to gradients. Table \ref{reconstruction_lap_table} quantitatively evaluates the corresponding peak reconstruction performance. Comparing these results with those obtained under Gaussian noise perturbations (Table \ref{reconstruction_noisy_table}), we observe that the reconstruction quality is determined primarily by the noise scale, while the specific noise distribution has only a negligible effect.

\begin{table}[ht]
    \centering
    \resizebox{0.38\textwidth}{!}{
    \begin{tabular}{cccc}
    \hline
    \toprule
     Laplacian Variance &  MSE$\downarrow$ &PSNR$\uparrow$ &LPIPS$\downarrow$ \\
    \midrule
    
      $10^{-4}$ & 0.0016   & 27.9395 & 9.3180e-6\\
         $10^{-3}$ & 0.0038 & 24.2073 & 3.7156e-5 \\
         $10^{-2}$ & 0.0304 & 15.1657 & 2.3746e-4\\
        $10^{-1}$ & 0.1390 & 8.5697 & 5.6100e-4\\
    \bottomrule
    \end{tabular}}
    \caption{Quantitative analysis on reconstruction quality with Laplacian noise-perturbed CNN gradients.}
    \label{reconstruction_lap_table}
\end{table}

More details of running time and GPU usage are provided in Appendix C.

\section{Conclusion and Future Work}

This paper proposes a novel GIA method based on conditional diffusion models to recover high-quality images from noise-perturbed gradients. Our approach requires minimal modifications to the reverse diffusion sampling and operates without prior knowledge. By exploiting the inherent generative capability of diffusion models, our attack effectively mitigates the impact of noise perturbation. Moreover, we derive theoretical bounds on the reconstruction error and analyze the convergence properties of the attack loss. Our findings further reveal an intrinsic vulnerability of diverse model architectures to GIAs, and we propose the RV metric to quantify this vulnerability. Extensive experiments confirm the superior performance of our method in recovering high-quality reconstructions from noise-perturbed gradients.

\section*{Acknowledgments}
This work is supported by National Natural Science Foundation
of China (U24B20144, U23A20299, 62172424, 62276270, 62322214,
62436010, 62441230), The Natural Science Foundation of Fujian Province (2024J08276, 2024J08278), The Key Research Project for Young and Middle-aged Researchers by the Fujian Provincial Department of Education (JZ230044).

\bibliography{aaai2026}

\appendix
\onecolumn 
\section{Proofs of Theorems}

\subsection{Proof of Theorem 5.1}
\textbf{Theorem 5.1 (Upper Bound of the Reconstruction Error).}
Under the attacked model $F(\mathbf{x}; W)$, we assume that the adversary obtains a Gaussian noise–perturbed gradient $\mathbf{g}_{\text{leaked}} = \nabla_{W} F(\mathbf{x}_\text{private};W) + \mathcal{N}(0, \sigma^2 I)$. The gradient inversion attack follows Algorithm~\ref{alg:1}, using the approximation $p({\mathbf{g}_{\text{leaked}}} | \mathbf{x}_{t}) \simeq p({\mathbf{g}_{\text{leaked}}} | \hat{\mathbf{x}}_{0}(\mathbf{x}_{t}))$, where $\hat{\mathbf{x}}_{0}(\mathbf{x}_{t})$ denotes the posterior mean of $\mathbf{x}_0$ given $\mathbf{x}_t$. The reconstruction error—measured by the Jensen gap—is upper bounded by:
        \begin{equation*}
        \begin{split}
            &\mathcal{J}\left( \mathbf{g}(\mathbf{x}), p(\mathbf{x}_0 | \mathbf{x}_{t}) \right)=|\mathbb{E}[\mathbf{g}(\mathbf{x}_0)]-\mathbf{g}(\mathbb{E}[\mathbf{x}_0])| \\
            & \leq \frac{n}{\sqrt{2\pi \sigma^{2}}} \| \nabla_{\mathbf{x}} \mathbf{g}(\mathbf{x}) \| \int \| \mathbf{x}_0 - \hat{\mathbf{x}}_{0}(\mathbf{x}_{t}) \|p(\mathbf{x}_0 | \mathbf{x}_{t}) d\mathbf{x}_0,
        \end{split}
        \end{equation*}
        where $n$ is $\mathbf{x}$'s dimension, and $\mathbf{g}(\mathbf{x})=\nabla_{\mathbf{x}} \nabla_{W} F(\mathbf{x}; W)$.

\begin{proof}[Proof of Theorem 5.1]
Analogous to the proof of Theorem 1 in \cite{b19}, the upper bound of the Jensen Gap is:
\begin{equation}
\begin{split}
    &\mathcal{J}\left( \mathbf{g}(\mathbf{x}), p(\mathbf{x}_0 | \mathbf{x}_{t}) \right) \\
    & \leq \frac{n}{\sqrt{2\pi \sigma^{2}}} \exp(- \frac{1}{2\sigma^2}) \| \nabla_{\mathbf{x}}\mathbf{g}(\mathbf{x})\| \int \| \mathbf{x}_0 - \hat{\mathbf{x}}_{0}(\mathbf{x}_{t}) \| p(\mathbf{x}_0 | \mathbf{x}_{t})d\mathbf{x}_0\\
    & \leq \frac{n}{\sqrt{2\pi \sigma^{2}}} \| \nabla_{\mathbf{x}} \mathbf{g}(\mathbf{x})\| \int \| \mathbf{x}_0 - \hat{\mathbf{x}}_{0}(\mathbf{x}_{t}) \| p(\mathbf{x}_0 | \mathbf{x}_{t})d\mathbf{x}_0.
\end{split}
\end{equation}
The final inequality holds because $\frac{\exp(-\frac{1}{\sigma^2})}{\sigma^2} \leq \frac{1}{\sigma^2}$.
\end{proof}

\subsection{Proof of Theorem 5.2}

\textbf{Theorem 5.2 (Lower Bound of the Reconstruction Error).}
Under conditions of Theorem \ref{UpperBoundJensen} and Assumptions \ref{assu1}-\ref{assu4}, for $\mathbf{x} \in \mathbb{R}^n$, the reconstruction error is lower bounded by:
    {
    \begin{equation*}
        \begin{split}
        & \mathcal{J}\left( \mathcal{L}\left( \mathbf{g}(\mathbf{x}),\mathbf{g}_{\text{leaked}} \right), p(\mathbf{x}_0 | \mathbf{x}_{t}) \right)\\
        &=|\mathbb{E}[\mathcal{L}(\mathbf{g}(\mathbf{x}_0),\mathbf{g}_{\text{leaked}})]-\mathcal{L}(\mathbf{g}(\mathbb{E}[\mathbf{x}_0]),\mathbf{g}_{\text{leaked}})| \\
        & \geq \frac{1}{2} \left( \beta \lambda_{min}\left( \mathbf{J}_\mathbf{g}(\mathbf{x})^T\mathbf{J}_\mathbf{g}(\mathbf{x}) \right) - \alpha L_g \right) \sum_{i=1}^{n} \sigma_{i}^{2},
        \end{split}
    \end{equation*}
    }
    where $\lambda_{\text{min}}(\cdot)$ represents the smallest eigenvalue of $\cdot$, $L_g$ can be refined to a specific value $\frac{n}{\sqrt{2 \pi \sigma^2}} \exp(-\frac{1}{2\sigma^2})$ under Gaussian noise-perturbed isotropic gradients, and $\{ \sigma_i^2 \}_{i=1}^n$ are the eigenvalues of the covariance matrix of $p(\mathbf{x}_0 | \mathbf{x}_{t})$.

To prove Theorem 5.2, we introduce Lemma \ref{lemma2} and Lemma \ref{lemma3}.

\begin{lemma}[\textbf{\cite{b19}}]
    Under the conditions of Theorem 5.1, if the attack loss $\mathcal{L}$ is $\beta^{\prime}$-strongly convex with respect to $\mathbf{x}$ and $p(\mathbf{x}_0| \mathbf{x}_{t})$ is a Gaussian distribution with covariance matrix whose eigenvalues are $\{ \sigma_i^2 \}_{i=1}^n$. Then the lower bound of Jensen Gap is bounded by:
\begin{equation}
    \mathcal{J}\left( \mathcal{L}\left( \mathbf{g}(\mathbf{x})),\mathbf{g}_{\text{leaked}} \right), p(\mathbf{x}_0 | \mathbf{x}_{t}) \right) \geq \frac{1}{2} \beta^{\prime} \sum_{i=1}^{n} \sigma_{i}^{2}.
\end{equation}
\label{lemma2}
\end{lemma}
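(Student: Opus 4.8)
The plan is to derive the bound directly from the quadratic inequality characterizing strong convexity, using crucially that the Jensen gap is evaluated at the mean of $p(\mathbf{x}_0\mid\mathbf{x}_t)$. I would first fix notation: write $h(\mathbf{x}) := \mathcal{L}(\mathbf{g}(\mathbf{x}),\mathbf{g}_{\text{leaked}})$, let $\mu := \mathbb{E}[\mathbf{x}_0\mid\mathbf{x}_t] = \hat{\mathbf{x}}_0(\mathbf{x}_t)$ denote the posterior mean, and let $\Sigma$ be the posterior covariance, whose eigenvalues are $\{\sigma_i^2\}_{i=1}^n$. Since $h$ is assumed $\beta'$-strongly convex in $\mathbf{x}$ (and differentiable, as the attack loss is smooth), the standard first-order characterization gives, for every realization $\mathbf{x}_0$,
\begin{equation*}
h(\mathbf{x}_0) \geq h(\mu) + \nabla h(\mu)^T(\mathbf{x}_0-\mu) + \frac{\beta'}{2}\|\mathbf{x}_0-\mu\|^2 .
\end{equation*}

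Next I would take the expectation of both sides over $\mathbf{x}_0\sim p(\mathbf{x}_0\mid\mathbf{x}_t)$. The decisive simplification is that the linear term drops out: because $\mu$ is exactly the mean, $\mathbb{E}[\mathbf{x}_0-\mu]=\mathbf{0}$, so the inner product $\nabla h(\mu)^T\mathbb{E}[\mathbf{x}_0-\mu]$ vanishes regardless of the value of $\nabla h(\mu)$. What remains is
\begin{equation*}
\mathbb{E}[h(\mathbf{x}_0)] - h(\mu) \geq \frac{\beta'}{2}\,\mathbb{E}\big[\|\mathbf{x}_0-\mu\|^2\big],
\end{equation*}
where the left-hand side is, by definition, the Jensen gap $\mathcal{J}\!\left(\mathcal{L}(\mathbf{g}(\mathbf{x}),\mathbf{g}_{\text{leaked}}),\,p(\mathbf{x}_0\mid\mathbf{x}_t)\right)$.

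It then remains to rewrite the second moment as the advertised eigenvalue sum. I would use $\mathbb{E}[\|\mathbf{x}_0-\mu\|^2] = \mathbb{E}[\mathrm{tr}\big((\mathbf{x}_0-\mu)(\mathbf{x}_0-\mu)^T\big)] = \mathrm{tr}(\Sigma)$ together with the fact that the trace of a symmetric matrix equals the sum of its eigenvalues, so $\mathrm{tr}(\Sigma)=\sum_{i=1}^n\sigma_i^2$. Substituting gives $\mathcal{J}\geq \tfrac{1}{2}\beta'\sum_{i=1}^n\sigma_i^2$, which is the claimed bound.

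I do not expect a genuine obstacle here; the argument is essentially self-contained. The two points that merit care are (i) confirming that the first-order term really vanishes, which relies on the gap being taken at the true posterior mean rather than at an arbitrary anchor point, and (ii) the clean conversion of $\mathbb{E}\|\mathbf{x}_0-\mu\|^2$ into $\mathrm{tr}(\Sigma)$. It is worth noting that Gaussianity of $p(\mathbf{x}_0\mid\mathbf{x}_t)$ is not strictly required for the inequality itself — only a finite second moment is needed — so the Gaussian hypothesis functions chiefly to give the eigenvalue description $\{\sigma_i^2\}$ its meaning and to align this bound with the DDIM reverse-sampling covariance employed elsewhere in the paper.
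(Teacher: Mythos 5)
Your proof is correct: the first-order characterization of $\beta^{\prime}$-strong convexity anchored at the posterior mean $\mu = \hat{\mathbf{x}}_0(\mathbf{x}_t)$, the vanishing of the linear term because $\mathbb{E}[\mathbf{x}_0 - \mu] = \mathbf{0}$, and the identity $\mathbb{E}\|\mathbf{x}_0-\mu\|^2 = \mathrm{tr}(\Sigma) = \sum_{i=1}^n \sigma_i^2$ deliver exactly the claimed bound. The paper itself omits the proof of this lemma, deferring to \cite{b19}, and your argument is the standard one that the reference supplies; your side remark that Gaussianity is not needed beyond a finite second moment (it serves only to give the eigenvalues $\{\sigma_i^2\}$ their meaning in the DDIM setting) is also accurate.
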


\begin{lemma}[\textbf{\cite{b19}}]
    Let $\phi(\cdot)$ be an isotropic multivariate Gaussian density function with variance matrix $\sigma^2 I$. There exists a constant $L$ such that $\forall \mathbf{x}, \mathbf{y} \in \mathbb{R}^{n}$,
    \begin{equation}
        \| \phi(\mathbf{x}) - \phi(\mathbf{y}) \| \leq L \| \mathbf{x} - \mathbf{y} \|,
    \end{equation}
    where $L = \frac{n}{\sqrt{2 \pi \sigma^2}} \exp(-\frac{1}{2\sigma^2})$.
    \label{lemma3}
\end{lemma}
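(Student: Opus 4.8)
The plan is to prove the estimate by the standard route for a smooth function: show that the gradient of $\phi$ is uniformly bounded in norm, and then invoke the mean value inequality to upgrade this pointwise gradient bound into a global Lipschitz estimate. Concretely, since $\phi$ is $C^\infty$ on $\mathbb{R}^n$, for any $\mathbf{x},\mathbf{y}$ I would write $\phi(\mathbf{x})-\phi(\mathbf{y}) = \int_0^1 \langle \nabla\phi(\mathbf{y}+s(\mathbf{x}-\mathbf{y})),\, \mathbf{x}-\mathbf{y}\rangle\, ds$ and apply Cauchy--Schwarz to obtain $\|\phi(\mathbf{x})-\phi(\mathbf{y})\| \le \big(\sup_{\mathbf{z}\in\mathbb{R}^n}\|\nabla\phi(\mathbf{z})\|\big)\,\|\mathbf{x}-\mathbf{y}\|$. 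It therefore suffices to establish $\sup_{\mathbf{z}}\|\nabla\phi(\mathbf{z})\| \le L$ with the claimed value of $L$.

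Next I would compute the gradient explicitly. Writing the isotropic density (with mean taken to be the origin without loss of generality) as $\phi(\mathbf{z}) = (2\pi\sigma^2)^{-n/2}\exp(-\|\mathbf{z}\|^2/(2\sigma^2))$, differentiation gives $\nabla\phi(\mathbf{z}) = -\sigma^{-2}\mathbf{z}\,\phi(\mathbf{z})$, so that $\|\nabla\phi(\mathbf{z})\| = \sigma^{-2}\|\mathbf{z}\|\,\phi(\mathbf{z})$, a quantity that depends on $\mathbf{z}$ only through the radius $r=\|\mathbf{z}\|$. The bound then reduces to the one-dimensional maximization of $\psi(r) := \sigma^{-2} r\,(2\pi\sigma^2)^{-n/2}\exp(-r^2/(2\sigma^2))$ over $r\ge 0$; setting $\psi'(r)=0$ yields the critical radius $r_\star=\sigma$, which is the global maximizer since $\psi(0)=0$ and $\psi(r)\to 0$ as $r\to\infty$.

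To land on the constant in exactly the stated form, I would instead control $\|\nabla\phi\|$ coordinate-wise, bounding it by a sum of the $n$ partial-derivative magnitudes $|\partial_i\phi(\mathbf{z})| = \sigma^{-2}|z_i|\,\phi(\mathbf{z})$ and then estimating each one-dimensional factor $\sigma^{-2}|z_i|\exp(-z_i^2/(2\sigma^2))$ separately while normalizing the remaining Gaussian factors. This is the step that produces the dimensional prefactor $n$, the one-dimensional normalization $1/\sqrt{2\pi\sigma^2}$, and the exponential term $\exp(-1/(2\sigma^2))$ that together make up $L$. Combining this uniform gradient bound with the mean value inequality from the first step yields $\|\phi(\mathbf{x})-\phi(\mathbf{y})\| \le L\|\mathbf{x}-\mathbf{y}\|$ and closes the argument.

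The main obstacle I anticipate is not the analytic idea, which is simply that a uniformly bounded gradient implies global Lipschitz continuity, but the bookkeeping required to reproduce the precise constant $L = \tfrac{n}{\sqrt{2\pi\sigma^2}}\exp(-\tfrac{1}{2\sigma^2})$. The tight spherical maximizer $r_\star=\sigma$ gives a supremum of a different algebraic shape, so matching the stated expression hinges on committing to the particular coordinate-wise estimate and normalization convention of \cite{b19}; the delicate point is to verify that the chosen estimate is a genuine upper bound on $\sup_{\mathbf{z}}\|\nabla\phi(\mathbf{z})\|$ throughout the relevant range of $\sigma$, rather than merely an evaluation at a convenient point.
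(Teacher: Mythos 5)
Your first two steps are correct and are the natural route: the mean value inequality reduces everything to $\sup_{\mathbf{z}}\|\nabla\phi(\mathbf{z})\|$, and your computation of that supremum is right --- $\nabla\phi(\mathbf{z}) = -\sigma^{-2}\mathbf{z}\,\phi(\mathbf{z})$, maximized at radius $r_\star=\sigma$, giving the sharp Lipschitz constant $L_\star = \frac{e^{-1/2}}{\sigma}\,(2\pi\sigma^2)^{-n/2}$. For calibration, note that the paper offers no proof to compare against: it explicitly omits the argument for this lemma and defers to \cite{b19}, so you were reconstructing an external proof.

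The genuine gap is your third step, and it cannot be closed as described. Honest coordinate-wise bookkeeping does not produce the stated constant: bounding $\|\nabla\phi(\mathbf{z})\|$ by $\sum_{i}|\partial_i\phi(\mathbf{z})|$ with $|\partial_i\phi(\mathbf{z})| = \sigma^{-2}|z_i|\,\phi(\mathbf{z})$, maximizing the $i$-th one-dimensional factor at $|z_i|=\sigma$, and bounding the remaining Gaussian factors by their peaks yields $\frac{n\,e^{-1/2}}{\sigma}(2\pi\sigma^2)^{-n/2}$ --- still carrying the full normalization $(2\pi\sigma^2)^{-n/2}$ and a factor $\sigma^{-1}$, with exponential $e^{-1/2}$ rather than $e^{-1/(2\sigma^2)}$. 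The stated $L = \frac{n}{\sqrt{2\pi\sigma^2}}\exp(-\frac{1}{2\sigma^2})$ is exactly $n$ times the one-dimensional density evaluated at distance $1$ from the mean, i.e., an evaluation at a convenient point rather than a supremum --- precisely the worry you flag at the end, and that worry is fatal. The stated $L$ is \emph{not} an upper bound on $\sup_{\mathbf{z}}\|\nabla\phi(\mathbf{z})\|$ for small $\sigma$: as $\sigma\to 0$ the claimed $L\to 0$ (the factor $e^{-1/(2\sigma^2)}$ vanishes) while the true Lipschitz constant diverges like $\sigma^{-(n+1)}$; already for $n=1$, $\sigma=0.1$ the sharp constant is about $24$ while the claimed $L$ is of order $10^{-21}$. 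The ratio of claimed to sharp constant is $n\,\sigma\,(2\pi\sigma^2)^{(n-1)/2}e^{1/2 - 1/(2\sigma^2)}$, which is at least $n$ when $\sigma\ge 1$, so the lemma is salvageable only under a restriction such as $\sigma\ge 1$, or by replacing $L$ with the sharp constant $L_\star$ you derived. In short: your method is sound and your sharp constant is correct, but no legitimate bookkeeping transforms it into the stated $L$ for all $\sigma>0$ --- the defect lies in the constant inherited from \cite{b19}, not in your strategy.
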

We omit the proofs of Lemmas \ref{lemma2} and \ref{lemma3} as they are provided in \cite{b19}.

\begin{proof}[Proof of Theorem 5.2]

For simplicity, we denote $\mathcal{L}(\mathbf{g}(\mathbf{x}),\mathbf{g}_{\text{leaked}})$ as $\mathcal{L}(\mathbf{g}(\mathbf{x}))$ within the proof. Taking Taylor's expansion of $\mathcal{L}(\mathbf{g}(\mathbf{x}))$ around $\mathbf{g}(\mathbf{y})$, we get:
\begin{equation}
        \mathcal{L}(\mathbf{g}(\mathbf{x})) = \mathcal{L}(\mathbf{g}(\mathbf{y}))  + \left( \mathbf{J}_\mathbf{g}(\mathbf{x})^T \nabla_\mathbf{g} \mathcal{L}(\mathbf{g}(\mathbf{x})) \right)^T (\mathbf{x} - \mathbf{y}) 
        + \frac{1}{2} (\mathbf{x} - \mathbf{y})^T \nabla^2 \mathcal{L}(\mathbf{g}(\mathbf{c})) (\mathbf{x} - \mathbf{y}),
    \end{equation}
where $\nabla^2 \mathcal{L}(\mathbf{g}(\mathbf{x}))$ denotes the Hessian matrix of $\mathcal{L}$ with respect to $\mathbf{x}$, $\mathbf{J}_\mathbf{g}(\mathbf{x})= \nabla_\mathbf{x}\mathbf{g}(\mathbf{x})$ is the Jacobian of $\mathbf{g}(\mathbf{x})$, and $\mathbf{c}$ is a point such that $\mathbf{g}(\mathbf{c}) \in [\mathbf{g}(\mathbf{x}), \mathbf{g}(\mathbf{y})]$.

Next, we analyze the Hessian matrix $\nabla^2 \mathcal{L}(\mathbf{g}(\mathbf{x}))$:
\begin{equation}
            \nabla^2 \mathcal{L}(\mathbf{g}(\mathbf{x})) = \nabla_{\mathbf{x}}\left( \mathbf{J}_\mathbf{g}(\mathbf{x})^T \nabla_\mathbf{g} \mathcal{L}(\mathbf{g}(\mathbf{x})) \right) = \mathbf{J}_\mathbf{g}(\mathbf{x})^T \nabla_\mathbf{g}^{2} \mathcal{L}(\mathbf{g}(\mathbf{x})) \mathbf{J}_\mathbf{g}(\mathbf{x}) + \sum_{i=1}^{m} \nabla \mathcal{L}_i(\mathbf{g}(\mathbf{x})) \nabla_{\mathbf{x}}^2 \mathbf{g}_i(\mathbf{x}),
    \end{equation}
where $\nabla \mathcal{L}_i(\mathbf{g}(\mathbf{x})) = \frac{\partial \mathcal{L}}{\partial \mathbf{g}_i(\mathbf{x})}$ and $m$ is the dimension of $\mathbf{g}(\mathbf{x})$. $\nabla^2 \mathbf{g}_i(\mathbf{x})$ is the Hessian matrix of the $i$-th component of $\mathbf{g}(\mathbf{x})$ with respect to $\mathbf{x}$. Since $\mathcal{L}$ is $\beta$-convex with respect to $\mathbf{g}(\mathbf{x})$, we have $\beta I \preceq \nabla_\mathbf{g}^2 \mathcal{L}(\mathbf{g}(\mathbf{x}))$. This implies:
\begin{equation}
        \beta \lambda_{\text{min}}\left( \mathbf{J}_\mathbf{g}(\mathbf{x})^T\mathbf{J}_\mathbf{g}(\mathbf{x}) \right) I  \preceq \beta  \mathbf{J}_\mathbf{g}(\mathbf{x})^T\mathbf{J}_\mathbf{g}(\mathbf{x})\preceq \mathbf{J}_\mathbf{g}(\mathbf{x})^T \nabla_\mathbf{g}^{2} \mathcal{L}(\mathbf{g}(\mathbf{x})) \mathbf{J}_\mathbf{g}(\mathbf{x}),
    \end{equation}
where $\lambda_{\text{min}}(\cdot)$ represents the smallest eigenvalue of a matrix.

Furthermore, given that $\mathbf{g}(\mathbf{x})$ is $L_g$-smooth with respect to $\mathbf{x}$, we have $\| \nabla^2 \mathbf{g}_i(\mathbf{x}) \| \leq L_g$. And since $\mathcal{L}$ is $\alpha$-smooth with respect to $\mathbf{g}(\mathbf{x})$, it follows that $\| \nabla \mathcal{L}_i(\mathbf{g}(\mathbf{x})) \| \leq \alpha$. Therefore, for the summation term:
\begin{equation}
        \| \sum_{i=1}^{m} \nabla \mathcal{L}_i(\mathbf{g}(\mathbf{x})) \nabla_{\mathbf{x}}^2 \mathbf{g}_i(\mathbf{x}) \| \leq m \alpha L_g.
    \end{equation}
    Combining these results, we establish a lower bound for $\nabla^2 \mathcal{L}(\mathbf{g}(\mathbf{x}))$:
    \begin{equation}
        \beta \lambda_{\text{min}}\left( \mathbf{J}_\mathbf{g}(\mathbf{x})^T\mathbf{J}_\mathbf{g}(\mathbf{x}) \right) I - m \alpha L_g I \preceq \nabla^2 \mathcal{L}(\mathbf{g}(\mathbf{x}))
    \end{equation}
    The supremum of the smallest eigenvalue of $\nabla^2 \mathcal{L}(\mathbf{g}(\mathbf{c}))$ is therefore:
    \begin{equation}
        \sup (\lambda_{\text{min}}(\nabla^2 \mathcal{L}(\mathbf{g}(\mathbf{c})))) = \beta \lambda_{\text{min}}\left( \mathbf{J}_\mathbf{g}(\mathbf{x})^T\mathbf{J}_\mathbf{g}(\mathbf{x}) \right) - \alpha L_g.
    \end{equation}
    Consequently, $\mathcal{L}(\mathbf{g}(\mathbf{x}))$ is $\beta^{\prime}$-convex, where $\beta^{\prime}= \beta \lambda_{\text{min}}\left( \mathbf{J}_\mathbf{g}(\mathbf{x})^T\mathbf{J}_\mathbf{g}(\mathbf{x}) \right) - \alpha L_g$. According to Lemma \ref{lemma2}, the lower bound of the Jensen Gap is bounded by $\frac{1}{2} \beta^{\prime} \sum_{i=1}^{n} \sigma_{i}^{2}$:
\begin{equation}
    \mathcal{J}\left( \mathcal{L}\left( \mathbf{g}(\mathbf{x}) \right), p(\mathbf{x}_0 | \mathbf{x}_{t}) \right) =\mathcal{J}\left( \mathcal{L}\left( \mathbf{g}(\mathbf{x}),\mathbf{g}_{\text{leaked}} \right), p(\mathbf{x}_0 | \mathbf{x}_{t}) \right) \geq \frac{1}{2} \left( \beta \lambda_{\text{min}}\left( \mathbf{J}_\mathbf{g}(\mathbf{x})^T\mathbf{J}_\mathbf{g}(\mathbf{x}) \right) - \alpha L_g \right) \sum_{i=1}^{n} \sigma_{i}^{2}.
    \end{equation}
    For the case in which $\mathbf{g}(\mathbf{x})$ is a gradient perturbed by Gaussian noise, Lemma \ref{lemma3} refines $L_g$ to a specific value $L_g = \frac{n}{\sqrt{2 \pi \sigma^2}} \exp(-\frac{1}{2\sigma^2})$.
\end{proof}

\subsection{Proof of Theorem 5.3}
\textbf{Theorem 5.3 (Convergence of the Attack Loss).}
Assume that the attack loss $\mathcal{L}(\mathbf{x})$ is $\alpha^{\prime}$-smooth, $\beta^{\prime}$-strongly convex with respect to $\mathbf{x}$. When GIA is executed following Algorithm~\ref{alg:1}, the sequence $\{ \mathcal{L}(\mathbf{x}_{t}) \}_{t=T}^{1}$ monotonically decreases as the denoising steps progress from timestep $t=T$ to $1$.

To prove Theorem 5.3, Lemma \ref{lemma4} is necessary. We omit its proof for brevity, as it is straightforward to derive.

\begin{lemma}
    Assume that the attack loss $\mathcal{L}(\mathbf{x})$ is $\alpha^{\prime}$-smooth and $\beta^{\prime}$-strongly convex with respect to $\mathbf{x}$. Let $H(\mathcal{L}(\mathbf{x}))$ represents the Hessian of $\mathcal{L}$ with respect to $\mathbf{x}$, we have: 
    \begin{equation}
        \beta^{\prime} I \preceq H(\mathcal{L}(\mathbf{x})) \preceq \alpha^{\prime} I,
    \end{equation}
    \label{lemma4}
where $I$ is the identity matrix.
\end{lemma}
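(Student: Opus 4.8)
The plan is to establish the two-sided Loewner inequality by proving the lower bound $\beta' I \preceq H(\mathcal{L}(\mathbf{x}))$ from strong convexity and the upper bound $H(\mathcal{L}(\mathbf{x})) \preceq \alpha' I$ from smoothness separately, then combining them. Throughout I would assume, as the statement implicitly does by invoking the Hessian $H(\mathcal{L}(\mathbf{x}))$, that $\mathcal{L}$ is twice continuously differentiable; this guarantees that $H(\mathcal{L}(\mathbf{x}))$ exists and is symmetric, so its eigenvalues are real and the positive-semidefinite ordering $\preceq$ is well posed.

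First I would recall the standard first-order monotonicity characterizations of the two hypotheses. The $\beta'$-strong convexity of $\mathcal{L}$ is equivalent to the monotonicity estimate $\langle \nabla\mathcal{L}(\mathbf{x}) - \nabla\mathcal{L}(\mathbf{y}),\, \mathbf{x}-\mathbf{y}\rangle \geq \beta'\|\mathbf{x}-\mathbf{y}\|^2$, while the $\alpha'$-smoothness of $\mathcal{L}$ means that $\nabla\mathcal{L}$ is $\alpha'$-Lipschitz, which by the Cauchy--Schwarz inequality yields $\langle \nabla\mathcal{L}(\mathbf{x}) - \nabla\mathcal{L}(\mathbf{y}),\, \mathbf{x}-\mathbf{y}\rangle \leq \alpha'\|\mathbf{x}-\mathbf{y}\|^2$. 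Both of these hold for all $\mathbf{x},\mathbf{y}\in\mathbb{R}^n$.

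Second, I would fix an arbitrary point $\mathbf{x}$ and an arbitrary direction $\mathbf{v}\in\mathbb{R}^n$, specialize the two estimates to $\mathbf{y}=\mathbf{x}+t\mathbf{v}$, divide through by $t^2$, and let $t\to 0$. A first-order Taylor expansion of the gradient gives $\nabla\mathcal{L}(\mathbf{x}+t\mathbf{v}) - \nabla\mathcal{L}(\mathbf{x}) = t\,H(\mathcal{L}(\mathbf{x}))\mathbf{v} + o(t)$, so in the limit both monotonicity inequalities collapse to the single quadratic-form sandwich $\beta'\|\mathbf{v}\|^2 \leq \mathbf{v}^T H(\mathcal{L}(\mathbf{x}))\mathbf{v} \leq \alpha'\|\mathbf{v}\|^2$. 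Since $\mathbf{v}$ was arbitrary, this is precisely the operator inequality $\beta' I \preceq H(\mathcal{L}(\mathbf{x})) \preceq \alpha' I$ claimed by the lemma.

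The main obstacle is essentially none of substance: this is a textbook equivalence in convex analysis, which is why the paper elects to omit the proof. The only points requiring genuine care are making the twice-differentiability hypothesis explicit (so that $H(\mathcal{L}(\mathbf{x}))$ is even defined and symmetric) and justifying the passage to the limit $t\to 0$ cleanly, so that the $o(t)$ remainder in the Taylor expansion does not contaminate the quadratic-order term after division by $t^2$. No deeper machinery is needed beyond these routine steps.
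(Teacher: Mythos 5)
Your proof is correct; note, though, that the paper gives no proof of this lemma at all---it explicitly omits it ``for brevity, as it is straightforward to derive''---so your argument simply supplies the standard derivation the authors are implicitly invoking: the first-order monotonicity characterizations of $\beta^{\prime}$-strong convexity and $\alpha^{\prime}$-smoothness, specialized to $\mathbf{y} = \mathbf{x} + t\mathbf{v}$ and passed to the limit $t \to 0$ via the Taylor expansion of the gradient, with the $o(t)$ remainder harmless after division by $t^2$ (the sign bookkeeping also works out, since both the gradient difference and the displacement flip sign, leaving the quadratic form unchanged). Your explicit flagging of the twice-differentiability hypothesis---needed for $H(\mathcal{L}(\mathbf{x}))$ to exist and be symmetric, and left tacit by the paper's mere use of the Hessian notation---is a worthwhile addition rather than a deviation.
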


\begin{proof}[Proof of Theorem 5.3]

Take Taylor expansion on $\mathcal{L}(\mathbf{x})$ around $\mathbf{x}_{t}$, which states:
\begin{equation}
        \mathcal{L}(\mathbf{x}) = \mathcal{L}(\mathbf{x}_{t}) + \left ( \nabla_{\mathbf{x}} \mathcal{L}(\mathbf{x}_t) \right)^T (\mathbf{x} - \mathbf{x}_{t}) + \frac{1}{2} (\mathbf{x} - \mathbf{x}_{t})^T H(\mathcal{L}(\mathbf{c})) (\mathbf{x} - \mathbf{x}_{t}),
    \end{equation}
where $\mathbf{c}$ is some point between $\mathbf{x}$ and $\mathbf{x}_{t}$.

Subsequently, we have:
    \begin{equation}
    \begin{split}
         \mathcal{L}(\mathbf{x}_{t-1}) -  \mathcal{L}(\mathbf{x}_{t}) & = \left ( \nabla_{\mathbf{x}} \mathcal{L}(\mathbf{x}_t) \right)^T (\mathbf{x}_{t-1} - \mathbf{x}_{t}) + \frac{1}{2} (\mathbf{x}_{t-1} - \mathbf{x}_{t})^T H(\mathcal{L}(\mathbf{c})) (\mathbf{x}_{t-1} - \mathbf{x}_{t})\\
        & \leq \left ( \nabla_{\mathbf{x}} \mathcal{L}(\mathbf{x}_t) \right)^T (\mathbf{x}_{t-1} - \mathbf{x}_{t}) + \frac{\alpha^{\prime}}{2} (\mathbf{x}_{t-1} - \mathbf{x}_{t})^T (\mathbf{x}_{t-1} - \mathbf{x}_{t}).
    \end{split}
\label{DSG-upp1}
\end{equation}
The inequality holds because the eigenvalues of $H(\mathcal{L}(\mathbf{c}))$ are bounded within $[ \beta^{\prime}, \alpha^{\prime} ]$. Since $\mathbf{x}_{t-1} = \arg \min_{\mathbf{x}} \left ( \nabla_{\mathbf{x}} \mathcal{L}(\mathbf{x}_t) \right)^T (\mathbf{x} - \mathbf{x}_{t})$ where $\mathbf{x} \in S_{\mu_\theta(\mathbf{x}_t,t),\sqrt{n}\sigma_t}^{n}$, substituting any other value for $\mathbf{x}_{t-1}$ would amplify the upper bound in Equation (\ref{DSG-upp1}). By replacing $\mathbf{x}_{t-1}$ with $\mathbf{x}_{t-1}^{+}$ where 
\begin{equation}
        \mathbf{x}_{t-1}^{+} = \frac{\left ( \nabla_{\mathbf{x}} \mathcal{L}(\mathbf{x}_t) \right)^T \mathbf{x}_{t} - \frac{\alpha^{\prime}}{2}(\mathbf{x}_{t-1} - \mathbf{x}_{t})^T (\mathbf{x}_{t-1} - \mathbf{x}_{t})}{\left ( \nabla_{\mathbf{x}} \mathcal{L}(\mathbf{x}_t) \right)^T \left ( \nabla_{\mathbf{x}} \mathcal{L}(\mathbf{x}_t) \right)} \cdot  \nabla_{\mathbf{x}} \mathcal{L}(\mathbf{x}_t).
\end{equation}
we have:
\begin{equation}
        \mathcal{L}(\mathbf{x}_{t-1}) -  \mathcal{L}(\mathbf{x}_{t}) \leq \left ( \nabla_{\mathbf{x}} \mathcal{L}(\mathbf{x}_t) \right)^T (\mathbf{x}_{t-1}^{+} - \mathbf{x}_{t}) + \frac{\alpha^{\prime}}{2} (\mathbf{x}_{t-1} - \mathbf{x}_{t})^T (\mathbf{x}_{t-1} - \mathbf{x}_{t}) = 0.
\end{equation}  
\end{proof}

\subsection{Proof of Theorem 5.4}

\textbf{Theorem 5.4 (Upper Bound on the Convergence Rate of the Attack Loss).}
Under Assumptions \ref{assu1}-\ref{assu4}, and assume that the attack loss $\mathcal{L}(\mathbf{x})$ is $\alpha^{\prime}$-smooth, $\beta^{\prime}$-strongly convex with respect to $\mathbf{x}$. The optimization using Algorithm \ref{alg:1} satisfies:
        \begin{equation*}
        \begin{split}
            \mathcal{L}(\mathbf{x}_{t-1}) - \mathcal{L}(\mathbf{x}_{t})& \geq  \frac{\beta^{\prime} n \sigma_t^2}{2} +\beta^{\prime} k \sqrt{n} \sigma_t + \frac{\beta^{\prime} k^2}{2}\\
            -  ( k  + &  \sqrt{n} \sigma_t) (\sqrt{\lambda_{\text{max}}(\mathbf{J}_\mathbf{g}(\mathbf{x})\mathbf{J}_\mathbf{g}(\mathbf{x})^T)} \cdot L_g),
        \end{split}
        \end{equation*}
    where $n$ is $\mathbf{x}$'s dimension, $k$ is a positive constant, $\lambda_{\text{max}}(\cdot)$ is the largest eigenvalue of $\cdot$, and $L_g = \frac{n}{\sqrt{2 \pi \sigma^2}} \exp(-\frac{1}{2\sigma^2})$.

The derivation of the lower bound for the attack loss decrease rate in Theorem 5.4 leverages Lemma \ref{lemma5}, which establishes an upper bound for $\| \nabla_{\mathbf{x}} \mathcal{L}(\mathbf{x}_t) \|_2$. 
\begin{lemma}[Upper bound of $\| \nabla_{\mathbf{x}} \mathcal{L}(\mathbf{x}_t) \|_2$]
    Under conditions in Assumptions 5.1-5.3, the upper bound of $\| \nabla_{\mathbf{x}} \mathcal{L}(\mathbf{x}_t) \|_2$ is:
    \begin{equation}
        \| \nabla_{\mathbf{x}} \mathcal{L}(\mathbf{x}_t) \|_2 \leq \sqrt{\lambda_{\text{max}}(\mathbf{J}_\mathbf{g}(\mathbf{x})\mathbf{J}_\mathbf{g}(\mathbf{x})^T)} \cdot L_g,
    \end{equation}
   where $\mathbf{x}\in \mathbb{R}^{n}$, and $L_g = \frac{n}{\sqrt{2 \pi \sigma^2}} \exp(-\frac{1}{2\sigma^2})$.
   \label{lemma5}
\end{lemma}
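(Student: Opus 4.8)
The plan is to decompose $\nabla_{\mathbf{x}} \mathcal{L}(\mathbf{x}_t)$ through the chain rule and then bound the two resulting factors separately. Writing the attack loss as the composition $\mathcal{L}(\mathbf{x}_t)=\mathcal{L}(\mathbf{g}(\mathbf{x}))$, the chain rule gives $\nabla_{\mathbf{x}} \mathcal{L}(\mathbf{x}_t) = \mathbf{J}_\mathbf{g}(\mathbf{x})^T \nabla_{\mathbf{g}} \mathcal{L}(\mathbf{g}(\mathbf{x}))$, which is exactly the factorization already employed in the Taylor expansion in the proof of Theorem 5.2. Taking Euclidean norms and invoking the definition of the spectral (operator) norm yields $\| \nabla_{\mathbf{x}} \mathcal{L}(\mathbf{x}_t) \|_2 \leq \| \mathbf{J}_\mathbf{g}(\mathbf{x})^T \|_2 \, \| \nabla_{\mathbf{g}} \mathcal{L}(\mathbf{g}(\mathbf{x})) \|_2$, so the task reduces to controlling each factor.

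For the first factor I would use the standard fact that the spectral norm of a matrix equals its largest singular value, together with the observation that $\mathbf{J}_\mathbf{g}(\mathbf{x})$ and $\mathbf{J}_\mathbf{g}(\mathbf{x})^T$ have the same singular values; since the squared singular values of $\mathbf{J}_\mathbf{g}(\mathbf{x})$ are the nonzero eigenvalues shared by $\mathbf{J}_\mathbf{g}(\mathbf{x})^T\mathbf{J}_\mathbf{g}(\mathbf{x})$ and $\mathbf{J}_\mathbf{g}(\mathbf{x})\mathbf{J}_\mathbf{g}(\mathbf{x})^T$, this gives $\| \mathbf{J}_\mathbf{g}(\mathbf{x})^T \|_2 = \sqrt{\lambda_{\text{max}}(\mathbf{J}_\mathbf{g}(\mathbf{x}) \mathbf{J}_\mathbf{g}(\mathbf{x})^T)}$. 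Assumption \ref{assu4} (full-rank Jacobian) ensures this quantity is well defined and nondegenerate.

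The main obstacle is the second factor, the claim $\| \nabla_{\mathbf{g}} \mathcal{L}(\mathbf{g}(\mathbf{x})) \|_2 \leq L_g$. Here I would exploit the Gaussian measurement model: since $\mathbf{g}_{\text{leaked}} \mid \mathbf{g}(\mathbf{x}) \sim \mathcal{N}(\mathbf{g}(\mathbf{x}), \sigma^2 I)$, the loss gradient with respect to $\mathbf{g}$ is governed by the isotropic Gaussian density $\phi$ with covariance $\sigma^2 I$. Invoking Lemma \ref{lemma3}, $\phi$ is $L_g$-Lipschitz with $L_g = \frac{n}{\sqrt{2\pi\sigma^2}}\exp(-\frac{1}{2\sigma^2})$, and a globally $L_g$-Lipschitz differentiable function has gradient norm bounded everywhere by its Lipschitz constant (via the mean value theorem), so $\| \nabla_{\mathbf{g}} \mathcal{L}(\mathbf{g}(\mathbf{x})) \|_2 \leq L_g$. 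Combining the two bounds then yields the claimed inequality. I expect the delicate point to be rigorously identifying $\nabla_{\mathbf{g}} \mathcal{L}$ with the gradient of the Gaussian likelihood and reconciling the $\alpha$-smoothness of Assumption \ref{assu1} with the sharper constant $L_g$ supplied by Lemma \ref{lemma3}; once that correspondence is fixed, the remaining steps are routine norm inequalities.
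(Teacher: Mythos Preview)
Your proposal is correct and follows essentially the same approach as the paper: both apply the chain rule $\nabla_{\mathbf{x}}\mathcal{L}=\mathbf{J}_\mathbf{g}(\mathbf{x})^T\nabla_\mathbf{g}\mathcal{L}$, bound the Jacobian factor by $\sqrt{\lambda_{\max}(\mathbf{J}_\mathbf{g}\mathbf{J}_\mathbf{g}^T)}$, and bound $\|\nabla_\mathbf{g}\mathcal{L}\|_2$ by $L_g$ via the Gaussian Lipschitz constant from Lemma~\ref{lemma3}. The only cosmetic difference is that the paper reaches the spectral-norm bound through an explicit eigendecomposition $\mathbf{J}_\mathbf{g}\mathbf{J}_\mathbf{g}^T=U^T\Lambda U$ and the identity $\|\Lambda^{1/2}U\,v\|_2\le\|\Lambda^{1/2}\|_2\|v\|_2$, whereas you invoke the operator-norm inequality directly; your route is slightly cleaner but substantively identical.
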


We begin by proving Lemma \ref{lemma5}.

\begin{proof}[Proof of Lemma A.4]
Expanding $\| \nabla_{\mathbf{x}} \mathcal{L}(\mathbf{x}) \|_2^2 $:
\begin{equation}
        \begin{split}
            \| \nabla_{\mathbf{x}} \mathcal{L}(\mathbf{x}) \|_2^2 &= \left( \nabla_{\mathbf{x}} \mathcal{L}(\mathbf{x}) \right)^T \left( \nabla_{\mathbf{x}} \mathcal{L}(\mathbf{x}) \right)\\
            &= \left( \mathbf{J}_\mathbf{g}(\mathbf{x})^T \nabla_\mathbf{g} \mathcal{L}(\mathbf{g}(\mathbf{x})) \right)^T \left( \mathbf{J}_\mathbf{g}(\mathbf{x})^T \nabla_\mathbf{g} \mathcal{L}(\mathbf{g}(\mathbf{x})) \right)\\
            &= \nabla_\mathbf{g} \mathcal{L}(\mathbf{g}(\mathbf{x}))^T \left( \mathbf{J}_\mathbf{g}(\mathbf{x})\mathbf{J}_\mathbf{g}(\mathbf{x})^T \right)\nabla_\mathbf{g} \mathcal{L}(\mathbf{g}(\mathbf{x})).
        \end{split}
    \end{equation}
Assuming that the eigendecomposition of $\mathbf{J}_\mathbf{g}(\mathbf{x})\mathbf{J}_\mathbf{g}(\mathbf{x})^T$ is $\mathbf{J}_\mathbf{g}(\mathbf{x})\mathbf{J}_\mathbf{g}(\mathbf{x})^T = U^T \Lambda U$, then we have:
\begin{equation}
        \begin{split}
            \| \nabla_{\mathbf{x}} \mathcal{L}(\mathbf{x}) \|_2^2 &= \nabla_\mathbf{g} \mathcal{L}(\mathbf{g}(\mathbf{x}))^T \left( U^T \Lambda U \right)\nabla_\mathbf{g} \mathcal{L}(\mathbf{g}(\mathbf{x})) \\
            &= \nabla_\mathbf{g} \mathcal{L}(\mathbf{g}(\mathbf{x}))^T \left( U^T \Lambda^{\frac{1}{2}} \Lambda^{\frac{1}{2}} U\right)\nabla_\mathbf{g} \mathcal{L}(\mathbf{g}(\mathbf{x}))\\
            &=\nabla_\mathbf{g} \mathcal{L}(\mathbf{g}(\mathbf{x}))^T \left( U^T (\Lambda^{\frac{1}{2}})^T \Lambda^{\frac{1}{2}} U\right)\nabla_\mathbf{g} \mathcal{L}(\mathbf{g}(\mathbf{x}))\\
            &= \left( \Lambda^{\frac{1}{2}} U \nabla_\mathbf{g} \mathcal{L}(\mathbf{g}(\mathbf{x})) \right)^T \left( \Lambda^{\frac{1}{2}} U \nabla_\mathbf{g} \mathcal{L}(\mathbf{g}(\mathbf{x})) \right)\\
            &= \| \Lambda^{\frac{1}{2}} U \nabla_\mathbf{g} \mathcal{L}(\mathbf{g}(\mathbf{x})) \|_2^2.
        \end{split}
    \end{equation}
Thus, 
    \begin{equation}
        \begin{split}
            \| \nabla_{\mathbf{x}} \mathcal{L}(\mathbf{x}) \|_2 &= \| \Lambda^{\frac{1}{2}} U \nabla_\mathbf{g} \mathcal{L}(\mathbf{g}(\mathbf{x})) \|_2\\
            & \leq \| \Lambda^{\frac{1}{2}} \|_2 \| U \nabla_\mathbf{g} \mathcal{L}(\mathbf{g}(\mathbf{x})) \|_2 = \| \Lambda^{\frac{1}{2}} \|_2 \| \nabla_\mathbf{g} \mathcal{L}(\mathbf{g}(\mathbf{x})) \|_2 \\
            & \leq \sqrt{\lambda_{\text{max}}(\mathbf{J}_\mathbf{g}(\mathbf{x})\mathbf{J}_\mathbf{g}(\mathbf{x})^T)} \cdot L_g,
        \end{split}
    \end{equation}
    where $L_g = \frac{n}{\sqrt{2 \pi \sigma^2}} \exp(-\frac{1}{2\sigma^2})$.
\end{proof}

We then prove Theorem 5.4.

\begin{proof}[Proof of Theorem 5.4]
According to Lemma \ref{lemma4}:
\begin{equation}
    \begin{split}
        \mathcal{L}(\mathbf{x}_{t-1}) -  \mathcal{L}(\mathbf{x}_{t}) & = \left ( \nabla_{\mathbf{x}} \mathcal{L}(\mathbf{x}_t) \right)^T (\mathbf{x}_{t-1} - \mathbf{x}_{t}) + \frac{1}{2} (\mathbf{x}_{t-1} - \mathbf{x}_{t})^T H(\mathcal{L}(\mathbf{c})) (\mathbf{x}_{t-1} - \mathbf{x}_{t})\\
        & \geq \underset{(*)}{\underbrace{\left ( \nabla_{\mathbf{x}} \mathcal{L}(\mathbf{x}_t) \right)^T (\mathbf{x}_{t-1} - \mathbf{x}_{t})}} + \underset{(**)}{\underbrace{\frac{\beta^{\prime}}{2} (\mathbf{x}_{t-1} - \mathbf{x}_{t})^T (\mathbf{x}_{t-1} - \mathbf{x}_{t})}}
    \end{split}
    \label{DSG-low1}
    \end{equation}

In Equation (\ref{DSG-low1}), $\mathbf{x}_{t-1}$ is defined as $\arg \min_{\mathbf{x}} \left ( \nabla_{\mathbf{x}} \mathcal{L}(\mathbf{x}_t) \right)^T (\mathbf{x} - \mathbf{x}_{t})$ for $\mathbf{x} \in S_{\mu_{\theta}(\mathbf{x}_t,t), \sqrt{n} \sigma_t}^{n}$. In our specific case, we obtain $\mathbf{x}_{t-1} = \mu_{\theta}(\mathbf{x}_{t},t) - \sqrt{n} \sigma_t d^*$ where $d^* = \frac{\nabla_{\mathbf{x}} \mathcal{L}(\mathbf{x}_t)}{\| \nabla_{\mathbf{x}} \mathcal{L}(\mathbf{x}_t) \|_2}$. Consequently, for $(*)$, we have: 

\begin{equation}
\begin{split}
    (*) &= \left ( \nabla_{\mathbf{x}} \mathcal{L}(\mathbf{x}_t) \right)^T (\mathbf{x}_{t-1}  - \mu_{\theta}(\mathbf{x}_{t},t) + \mu_{\theta}(\mathbf{x}_{t},t) - \mathbf{x}_{t})\\
    &= -\sqrt{n} \sigma_t \| \nabla_{\mathbf{x}} \mathcal{L}(\mathbf{x}_t) \|_2 + \left ( \nabla_{\mathbf{x}} \mathcal{L}(\mathbf{x}_t) \right)^T \left(\mu_{\theta}(\mathbf{x}_{t},t) - \mathbf{x}_{t}\right).
\end{split}
\end{equation}
For $(**)$, we have:
\begin{equation}
    \begin{split}
        (**) &= \frac{\beta^{\prime}}{2} (\mathbf{x}_{t-1} - \mathbf{x}_{t})^T (\mathbf{x}_{t-1} - \mathbf{x}_{t}) \\
        & = \frac{\beta^{\prime}}{2} (\mathbf{x}_{t-1} - \mu_{\theta}(\mathbf{x}_{t},t))^T (\mathbf{x}_{t-1} - \mu_{\theta}(\mathbf{x}_{t},t)) \\
        & \quad +\frac{\beta^{\prime}}{2} (\mu_{\theta}(\mathbf{x}_{t},t)-\mathbf{x}_{t})^T (\mu_{\theta}(\mathbf{x}_{t},t)-\mathbf{x}_{t}) \\
        & \quad + \mu (\mathbf{x}_{t-1} - \mu_{\theta}(\mathbf{x}_{t},t))^T \left(\mu_{\theta}(\mathbf{x}_{t},t) - \mathbf{x}_{t}\right)\\
        &=\frac{\beta^{\prime} n \sigma_t^2}{2} + \frac{\beta^{\prime}}{2} \left ( 2\mathbf{x}_{t-1} - \mathbf{x}_{t} - \mu_{\theta}(\mathbf{x}_{t},t)\right)^T\left(\mu_{\theta}(\mathbf{x}_{t},t) - \mathbf{x}_{t}\right)\\
        &= \frac{\beta^{\prime} n \sigma_t^2}{2} + \frac{\beta^{\prime}}{2} \left ( \mathbf{x}_{t-1} - \mathbf{x}_{t} \right)^T\left(\mu_{\theta}(\mathbf{x}_{t},t) - \mathbf{x}_{t}\right) \\
        & \quad + \frac{\beta^{\prime}}{2} \left ( \mathbf{x}_{t-1} - \mu_{\theta}(\mathbf{x}_{t},t)\right)^T\left(\mu_{\theta}(\mathbf{x}_{t},t) - \mathbf{x}_{t}\right).
    \end{split}
    \label{(**)}
\end{equation}
To derive the lower bound of $\mathcal{L}(\mathbf{x}_{t-1}) -  \mathcal{L}(\mathbf{x}_{t})$, the common factor identified is $\mu_{\theta}(\mathbf{x}_{t},t) - \mathbf{x}_{t}$. The terms $\frac{\beta^{\prime}}{2} \left ( \mathbf{x}_{t-1} - \mathbf{x}_{t} \right)^T\left(\mu_{\theta}(\mathbf{x}_{t},t) - \mathbf{x}_{t}\right)$ and $\frac{\beta^{\prime}}{2} \left ( \mathbf{x}_{t-1} - \mu_{\theta}(\mathbf{x}_{t},t)\right)^T\left(\mu_{\theta}(\mathbf{x}_{t},t) - \mathbf{x}_{t}\right)$ in Equation (\ref{(**)}) represent two inner products. To demonstrate how the denoising ability can reduce the attack loss, we seek a value that is demonstrably achievable. Considering the randomness in $\mu_{\theta}(\mathbf{x}_{t},t)$, by setting $\mu_{\theta}(\mathbf{x}_{t},t) - \mathbf{x}_{t} = -kd^* (k>0)$, the term $\frac{\beta^{\prime}}{2} \left ( \mathbf{x}_{t-1} - \mu_{\theta}(\mathbf{x}_{t},t)\right)^T\left(\mu_{\theta}(\mathbf{x}_{t},t) - \mathbf{x}_{t}\right)$ in Equation (\ref{(**)}) can be expressed as:
\begin{equation}
        \quad \frac{\beta^{\prime}}{2} \left ( \mathbf{x}_{t-1} - \mu_{\theta}(\mathbf{x}_{t},t)\right)^T\left(\mu_{\theta}(\mathbf{x}_{t},t) - \mathbf{x}_{t}\right) = - \frac{\beta^{\prime}}{2} (\sqrt{n} \sigma_t d^*)^T (-kd^*)
        = \frac{\beta^{\prime} k \sqrt{n} \sigma_t }{2}.
    \end{equation}
Then, the term $\frac{\beta^{\prime}}{2} \left ( \mathbf{x}_{t-1} - \mathbf{x}_{t} \right)^T\left(\mu_{\theta}(\mathbf{x}_{t},t) - \mathbf{x}_{t}\right)$ in Equation (\ref{(**)}) is:
    \begin{equation}
        \begin{split}
            &\frac{\beta^{\prime}}{2} \left ( \mathbf{x}_{t-1} - \mathbf{x}_{t} \right)^T\left(\mu_{\theta}(\mathbf{x}_{t},t) - \mathbf{x}_{t}\right) \\=& \frac{\beta^{\prime}}{2} \left ( \mathbf{x}_{t-1} - \mu_{\theta}(\mathbf{x}_{t},t) -kd^* \right)^T\left(-kd^*\right)\\ 
            =& - \frac{\beta^{\prime}}{2} \left( (\sqrt{n} \sigma_t - k) d^* \right)^T(-kd^*)\\
            =& \frac{\beta^{\prime} k(\sqrt{n} \sigma_t + k)}{2}. 
        \end{split}
    \end{equation}
The $(*)$ is:
\begin{equation}
\begin{split}
    (*) & = -\sqrt{n} \sigma_t \| \nabla_{\mathbf{x}} \mathcal{L}(\mathbf{x}_t) \|_2 + \left ( \nabla_{\mathbf{x}} \mathcal{L}(\mathbf{x}_t) \right)^T \left(\mu_{\theta}(\mathbf{x}_{t},t) - \mathbf{x}_{t}\right)\\
    &= -\sqrt{n} \sigma_t \| \nabla_{\mathbf{x}} \mathcal{L}(\mathbf{x}_t) \|_2 - k \| \nabla_{\mathbf{x}} \mathcal{L}(\mathbf{x}_t) \|_2.
\end{split}
\end{equation}
Finally, we have:
\begin{equation}
    \begin{split}
        & \mathcal{L}(\mathbf{x}_{t-1}) -  \mathcal{L}(\mathbf{x}_{t}) \\
        & \geq \frac{\beta^{\prime} n \sigma_t^2}{2} +\beta^{\prime} k \sqrt{n} \sigma_t - (k + \sqrt{n} \sigma_t) \| \nabla_{\mathbf{x}} \mathcal{L}(\mathbf{x}_t) \|_2 + \frac{\beta^{\prime} k^2}{2}\\
        & \geq \frac{\beta^{\prime} n \sigma_t^2}{2} +\beta^{\prime} k \sqrt{n} \sigma_t - (k + \sqrt{n} \sigma_t) (\sqrt{\lambda_{\textbf{max}}(\mathbf{J}_\textbf{g}(\mathbf{x})\mathbf{J}_\textbf{g}(\mathbf{x})^T)} \cdot L_g) + \frac{\beta^{\prime} k^2}{2}.
    \end{split}
    \label{DSG-low2}
    \end{equation}
\end{proof}

To better understand the proof, see Figure \ref{noise_lap_appendix} for details. If $\mu_{\theta}(\mathbf{x}_{t},t)- \mathbf{x}_{t} = -kd^* (k>0)$ is not satisfied, it is the case in Figure \ref{DSG_01}. Due to the randomness of $\mathbf{x}_{t-1}$, $\mu_{\theta}(\mathbf{x}_{t},t) - \mathbf{x}_{t} = -kd^* (k>0)$ can be satisfied which is the case in Figure \ref{DSG_02}. In this case, $\mathcal{L}(\mathbf{x}_{t-1}) -  \mathcal{L}(\mathbf{x}_{t})$ reaches its lower bound.

\begin{figure}[htp!]
    \centering
    \vspace{-2mm}
    \subfigure[$\mu_{\theta}(\mathbf{x}_{t},t) - \mathbf{x}_{t} \neq -kd^*$]{
		\includegraphics[width=0.4\textwidth]{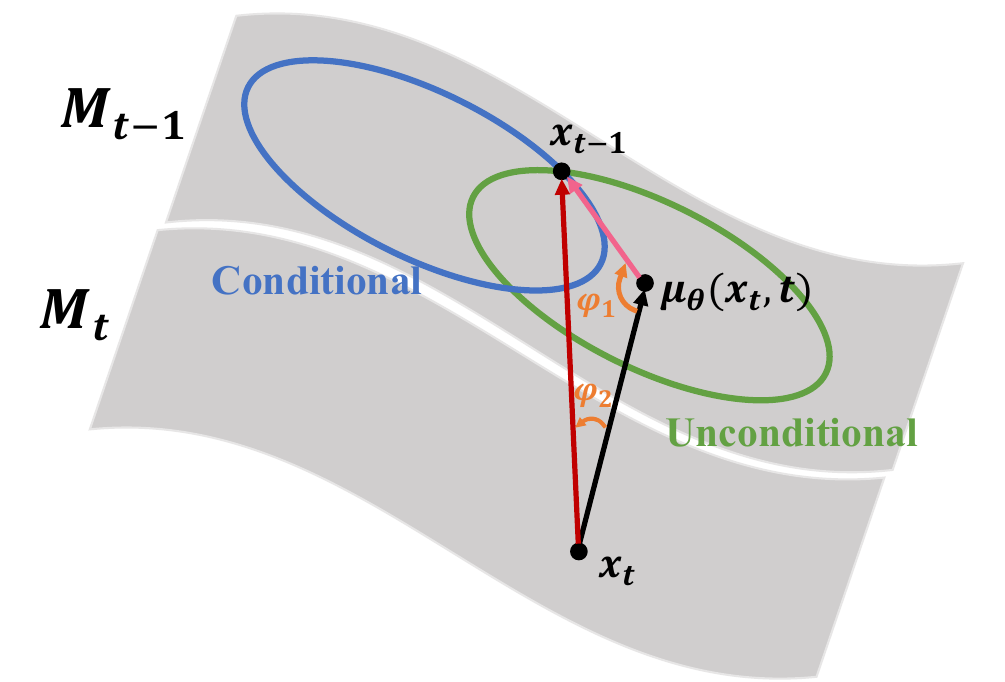}
		\label{DSG_01}
    }
    \subfigure[$\mu_{\theta}(\mathbf{x}_{t},t) - \mathbf{x}_{t} = -kd^*$]{
		\includegraphics[width=0.4\textwidth]{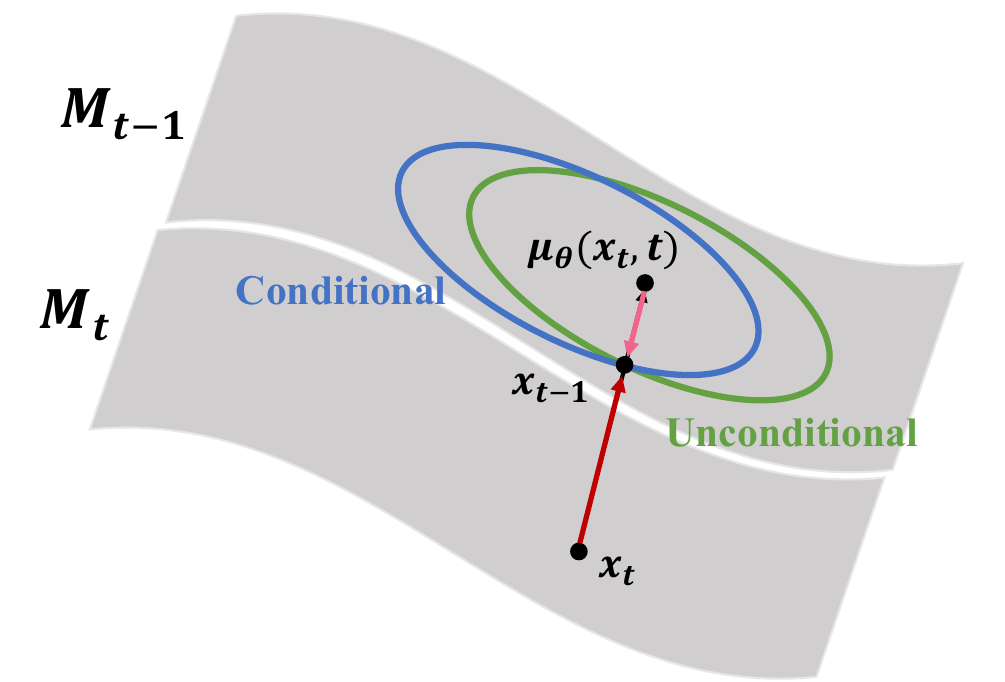}
		\label{DSG_02}
    }
    \caption{Explanations for Theorem 5.4.}
    \label{noise_lap_appendix}
\end{figure}

\section{Additional Experimental Settings}

\subsection{Machine Configuration}

All experiments were performed on a server equipped with dual Intel(R) Xeon(R) Silver 4310 CPUs @2.10GHz and a single A100 GPU. We implement our attack with Pytorch 2.0.1.

\subsection{Architectures of Attacked Models}

The architectures of attacked models are detailed as follows:

$\bullet$ \textit{MLP-3/4/5.} These three Multilayer Perceptrons process flattened image inputs and utilize ReLU activations in their hidden layers. MLP-3 is a three-layer network with two hidden layers (512 and 128 units). MLP-4 extends this with three hidden layers (512, 256, and 64 units). MLP-5 further deepens the architecture, featuring four hidden layers (1028, 512, 256, and 64 units). All MLP models conclude with a linear output layer for classification.

$\bullet$ \textit{CNN.} The CNN architecture consists of two convolutional layers (64 and 128 output channels, respectively, with $3\times3$ kernels and padding), each followed by ReLU activation and $2 \times2$ max-pooling. Feature maps are then flattened and processed by a 256-unit ReLU fully connected layer, culminating in a softmax output for classification. This CNN serves as the default attacked model in our experiments.

$\bullet$ \textit{ResNet18.} ResNet18 employs residual blocks with skip connections. It begins with an $7\times7$ convolutional layer and max-pooling, followed by four stages, each containing two residual blocks. Each block utilizes two $3\times3$ convolutional layers, batch normalization, and ReLU activation. The network concludes with adaptive average pooling and a fully connected output layer for classification.

$\bullet$ \textit{VGGNet16.} Our VGGNet16 implementation adheres to the classic architecture, featuring five convolutional blocks. The initial two blocks each contain two $3\times3$ convolutional layers, and the subsequent blocks employ three $3\times3$ convolutional layers. All convolutional layers are followed by ReLU activations and $2\times2$ max-pooling. Flattened features then pass through a fully connected classifier with two 4096-unit ReLU layers (with dropout) and a final softmax output.

\section{Running Time and GPU Usage}

Table \ref{running_time} provides an overview of the computational resources consumed during the reconstruction process. Specifically, it details the average running time (in seconds) per conditional reverse step for GGSS-R when applied to various attacked model architectures. Furthermore, the table presents the peak GPU usage (in MiB) observed throughout the reconstruction process.

\begin{table}[!htp]
    \centering
    \begin{tabular}{l c c} % No | for cleaner look with booktabs
    \toprule
        Attacked Model &Running Time (s) / step & Peak GPU Usage (MiB) \\ % Added units to headers
    \midrule
            ResNet18   & 0.2259  & 3396 \\
            MLP-5 & 2.6473  &9680  \\
            VGGNet16  & 6.3582  & 30314 \\
            CNN &  2.3947 &  7822\\
            MLP-4 &  1.9149 &  6084 \\
            MLP-3 &  1.4939 &  5247\\
    \bottomrule
    \end{tabular}
    \caption{Average running time (s) per gradient-guided reverse diffusion sampling step and peak GPU usage (MiB) for GGSS-R.}
    \label{running_time}
\end{table}

\end{document}